\newtheorem{theorem}{Theorem}[section]
\newtheorem{lemma}[theorem]{Lemma}
\newtheorem{proposition}[theorem]{Proposition}
\newtheorem{corollary}[theorem]{Corollary}
\newtheorem{construction}{Construction}[section]
\theoremstyle{remark}
\newtheorem{definition}{Definition}[section]
\newtheorem{example}{Example}
\newcommand\nc\newcommand
\nc\bfa{{\boldsymbol a}}\nc\bfA{{\boldsymbol A}}\nc\cA{{\EuScript A}}
\nc\bfb{{\boldsymbol b}}\nc\bfB{{\boldsymbol B}}\nc\cB{{\EuScript B}}
\nc\bfc{{\boldsymbol c}}\nc\bfC{{\boldsymbol C}}\nc\cC{{\mathscr C}}
\nc\bfd{{\boldsymbol d}}\nc\bfD{{\boldsymbol D}}\nc\cD{{\mathscr D}}
\nc\bfe{{\boldsymbol e}}\nc\bfE{{\boldsymbol E}}\nc\cE{{\EuScript E}}
\nc\bff{{\boldsymbol f}}\nc\bfF{{\boldsymbol F}}\nc\cF{{\mathscr F}}
\nc\bfg{{\boldsymbol g}}\nc\bfG{{\boldsymbol G}}\nc\cG{{\EuScript G}}
\nc\bfh{{\boldsymbol h}}\nc\bfH{{\boldsymbol H}}\nc\cH{{\mathcal H}}
\nc\bfi{{\boldsymbol i}}\nc\bfI{{\boldsymbol I}}\nc\cI{{\mathcal I}}
\nc\bfj{{\boldsymbol j}}\nc\bfJ{{\boldsymbol J}}\nc\cJ{{\EuScript J}}
\nc\bfk{{\boldsymbol k}}\nc\bfK{{\boldsymbol K}}\nc\cK{{\EuScript K}}
\nc\bfl{{\boldsymbol l}}\nc\bfL{{\boldsymbol L}}\nc\cL{{\EuScript L}}
\nc\bfm{{\boldsymbol m}}\nc\bfM{{\boldsymbol M}}\nc\cM{{\EuScript M}}
\nc\bfn{{\boldsymbol n}}\nc\bfN{{\boldsymbol N}}\nc\cN{{\EuScript N}}
\nc\bfo{{\boldsymbol o}}\nc\bfO{{\boldsymbol O}}\nc\cO{{\EuScript O}}
\nc\bfp{{\boldsymbol p}}\nc\bfP{{\boldsymbol P}}\nc\cP{{\EuScript P}}
\nc\bfq{{\boldsymbol q}}\nc\bfQ{{\boldsymbol Q}}\nc\cQ{{\mathcal Q}}
\nc\bfr{{\boldsymbol r}}\nc\bfR{{\boldsymbol R}}\nc\cR{{\EuScript R}}
\nc\bfs{{\boldsymbol s}}\nc\bfS{{\boldsymbol S}}\nc\cS{{\EuScript S}}
\nc\bft{{\boldsymbol t}}\nc\bfT{{\boldsymbol T}}\nc\cT{{\EuScript T}}
\nc\bfu{{\boldsymbol u}}\nc\bfU{{\boldsymbol U}}\nc\cU{{\EuScript U}}
\nc\bfv{{\boldsymbol v}}\nc\bfV{{\boldsymbol V}}\nc\cV{{\mathscr V}}
\nc\bfw{{\boldsymbol w}}\nc\bfW{{\boldsymbol W}}\nc\cW{{\mathscr W}}
\nc\bfx{{\boldsymbol x}}\nc\bfX{{\boldsymbol X}}\nc\cX{{\EuScript X}}
\nc\bfy{{\boldsymbol y}}\nc\bfY{{\boldsymbol Y}}\nc\cY{{\mathscr Y}}
\nc\bfz{{\boldsymbol z}}\nc\bfZ{{\boldsymbol Z}}\nc\cZ{{\EuScript Z}}
\nc{\remove}[1]{}
\DeclareSymbolFont{bbold}{U}{bbold}{m}{n}
\DeclareSymbolFontAlphabet{\mathbbold}{bbold}
\DeclareMathOperator{\supp}{supp}
\newcommand{\nb}{{\bar{n}}}
\nc\reals{{\mathbb R}}
\nc{\ff}{{\mathbb F}}
\nc{\PP}{{\mathbb P}}
\nc{\complex}{{\mathbb C}}
\nc{\lket}[1]{{\left\vert{#1}\right\rangle}}
\begin{document}

\preprint{APS/123-QED}

\title{Class of codes correcting absorptions and emissions}

\author{Arda Aydin}
\affiliation{ISR and Dept. of ECE, University of Maryland, College Park, MD 20742
}
\author{Alexander Barg}%
\affiliation{ISR and Dept. of ECE, University of Maryland, College Park, MD 20742
}
\affiliation{ QuICS, NIST/UMD}




\date{\today}

\begin{abstract}
We construct a general family of quantum codes that protect against all emission, absorption, dephasing, and raising/lowering errors up to an arbitrary fixed order. Such codes are known in the literature as absorption-emission (AE) codes. We derive simplified error correction conditions for a general AE code and show that any permutation-invariant code that corrects $\le t$ errors can be mapped to an AE code that corrects up to order-$t$ transitions. Carefully tuning the parameters of permutationally invariant codes, we
construct several examples of efficient AE codes, hosted in systems with low total angular momentum. 
Our results also imply that spin codes can be mapped to AE codes, enabling us to characterize logical operators for certain subclasses of such codes.
\end{abstract}

\maketitle

\section{\label{sec:Introduction} Introduction}

While the prevailing research theme in quantum information processing focuses on encoding 
and manipulating information in two-state of simple systems such as atoms, photons, or electrons,  recent 
technological developments have supported devoting attention to storing quantum information
in more complex systems such as molecules, beginning with diatomic molecules \cite{yan2013observation, ni2018dipolar,Albert2020molecule,furey2024strategiesimplementingquantumerror,sharma2024quantumerrorcorrectionunresolvable}. In particular, \cite{Albert2020molecule} proposed information encoding that supports protection against noise that causes small changes in the angular position or momentum. However,  
it was later shown in \cite{AEcodes} that this noise model does not adequately describe the real physical noise in more general systems. At the same time, such systems are difficult to create
experimentally since they rely on complex superpositions and require systems with high total angular momentum. 

Pursuing this line of thought, \cite{AEcodes} recently studied information processing in systems with error processes
governed by photon absorption, emission, and the Zeeman interaction, which represent common sources of noise in molecular systems.
 They have further proposed to study codes for 
protection against this type of noise, calling them absorption-emission or \eczoo[{AE codes}]{ae} (\cite{AEcodes} uses the ligature \AE\ as the
code name). The theory put forward in \cite{AEcodes} applies to any system that admits multiple $(2J+1)$-dimensional irreducible
representations of the total $SU(2)$ angular momentum $J$. The state space considered for such systems is spanned by the 
states of the form $\ket{J,m}$, where $m$ varies from $-J$ to $J$ and represents the $z$-axis projection of the spin.
The noisy channel acting on the system encoding information has the Kraus representation formed by the operators 
that change the values of $J$ and $m$ by one unit in both directions (or leave them intact). See Eqns.~\eqref{eq:DefErrorSet}-\eqref{eq:DefErrorSet1} for a formal description, which expresses the error operators as combinations of the Kraus
operators controlled by the Clebsch-Gordan coefficients.

The theory developed in \cite{AEcodes} motivated its authors to study constructions of codes that protect against the AE noise. Subsequently, the authors of \cite{furey2024strategiesimplementingquantumerror} studied implementation strategies for such codes in linear molecules as well as an approximate version of AE codes. 
The explicit form of the noise operators is technically involved and complicates verifying
the Knill-Laflamme (KL) conditions for error correction. To overcome this difficulty, both \cite{AEcodes} and 
\cite{furey2024strategiesimplementingquantumerror} considered codes with basis states separated by several units of the $z$-component, so that unit steps of $m$ in either
direction do not move codewords to codewords. This ansatz ``diagonalizes'' the KL conditions, enabling construction of explicit codes. At the same time, this construction approach
limits the search space for good codes, often resulting in codes with higher total
angular momentum. Our examples below confirm this intuition.
To lift these restrictions, in this work we derive simplified error correction conditions that do not rely on any separation between basis states, which leads us to  make a connection between AE codes and some other code families, notably \eczoo[permutation-invariant codes]{combinatorial_permutation_invariant} \cite{ruskai-polatsek,ouyangPI,aydin2023family}. 

We take our motivation from a remark in \cite{AEcodes}, which observed that their construction of AE codes can be relevant to code families protecting against other types of noise. To quote its authors, ``Using the ansatz consisting of identical spacing between any neighboring pair of states [...] recovers [...] the permutation-invariant GNU codes \cite{ouyangPI}.'' In this paper, we propose to construct AE codes
as images of permutation-invariant codes, whose basis states are linear combinations of Dicke states, making 
them invariant under all permutations of the qubits. Our starting point is the set of necessary and sufficient conditions for a code to be stabilized by the symmetric group, expressed as a set of equation satisfied by the coefficients of the linear combinations of Dicke states (see \cite{aydin2023family} or Theorems~\ref{theorem:Main} and 
\ref{thm:PIConditions} below in this paper).
Mapping the Dicke states onto $\ket{J,m}$ systems suggests a way of obtaining AE codes, although verifying the KL conditions still presents a challenge. Taking up this task, we work with explicit form of the Clebsch-Gordan
coefficients coupled with the basis states of permutation-invariant codes of \cite{aydin2023family}. Somewhat surprisingly, we are able to show that if the KL conditions of \cite{aydin2023family} hold true, then so do the KL conditions for AE codes. A priori this is unexpected because we have moved from one type of systems to a rather different kind, and there is little hope that the error-correcting conditions will transfer to the new setting. 

As a result, we are able to obtain a large class of AE codes relying on available families of permutation-invariant codes. These codes are relatively well researched, and we can use existing proposals to 
construct AE codes. In particular, relying on a large family of permutation-invariant codes constructed recently in \cite{aydin2023family}, we obtain a general family of AE codes. Specific examples constructed in the paper yield codes hosted by systems with the lowest known angular momentum among all codes with comparable noise resilience properties. These 
results contribute to the practical utility of the AE code constructions since molecules with lower angular momentum are more stable and therefore easier to manipulate.
Moreover, since there exist permutation-invariant codes that encode multiple qubits of information \cite{ouyangQudit}, we
are able to leverage these constructions to obtain explicit AE codes of higher dimension. Along the way, we also 
lift the spacing anzatz of \cite{AEcodes}, adding flexibility to the AE code constructions.

We further develop the connection described above by using the fact, established in \cite{gross2,kubischta2023notsosecret}, that \eczoo[spin codes]{j_gross} of \cite{gross} can be mapped to permutation-invariant codes. Combining this observation with our mapping from the latter class to AE codes, we observed that spin codes can be mapped to AE codes with similar error correction properties.
This connection enables us to construct AE codes admitting specific subgroups of $SU(2)$ as logical unitary operators, as well as other new AE codes.

\section{Preliminaries}
A quantum error correction code is a subspace of the Hilbert space of a physical system. An $n$-dimensional code is defined by its basis $\{\ket{\bfc_i} : i\in\{0,1,\ldots,n-1\}\}$, and a set of errors $\cE=\{\hat{E}_a\}$ is correctable if
\begin{align*}
    \bra{\bfc_i}\hat{E}_a^\dagger\hat{E}_b\ket{\bfc_j}=\delta_{i,j}g_{ab} \quad\quad \text{(KL conditions)}
\end{align*}
holds for all $\hat{E}_a,\hat{E}_b\in \cE$ \cite{knill,knill2}. Here, $g_{ab}$ are complex coefficients
and $\delta_{i,j}$ is the Kronecker delta. Similarly, a set of errors is detectable if
\begin{align*}
    \bra{\bfc_i}\hat{E}_a\ket{\bfc_j}=\delta_{i,j}g_{a}
\end{align*}
holds for all $\hat{E}_a\in \cE$.
In this paper, we consider a system with the total angular momentum $J=n/2$, where $n$ is a positive integer. The code space we examine is a subspace of a Hilbert space spanned by the states $ \{\ket{J,m} : m\in\{ -J,-J+1,\ldots,J-1,J \} \} $. Throughout this paper, we will study two-dimensional codes with the basis
\begin{align}\label{eq:DefCodeSpace}
    \ket{\bfc_0} = \sum_{j=0}^n \alpha_j\ket{n/2,j-n/2},\quad \ket{\bfc_1} = \sum_{j=0}^n \beta_j\ket{n/2,j-n/2}, 
\end{align}
where $\alpha_j,\beta_j$ are complex coefficients. 
We say that the code can correct (detect) up to {\em order-$t$ transitions} if it corrects (resp., detects) the set of errors given by \cite{AEcodes}
\begin{align}\label{eq:DefErrorSet}
    \cE_t = \{\hat{E}_{\delta m}^{r,\delta J}: |\delta J|\leq r\leq t, |\delta m|\leq r\leq t \}.
\end{align}
Here the error operators have the form
\begin{align}\label{eq:DefErrorSet1}
    \hat{E}_{\delta m}^{r,\delta J} \propto \sum_{m=-J}^{J} C^{J+\delta J,m+\delta m}_{J,m;r,\delta m}\ket{J+\delta J,m+\delta m}\bra{J,m},
\end{align}
where $C^{J+\delta J,m+\delta m}_{J,m;r,\delta m}$ are the Clebsch-Gordan coefficients that express the coupled angular momentum in terms of the uncoupled basis (see e.g., \cite[Sec.3.6]{sakurai} or \cite{ClebcshBook} for an introduction to the theory of angular momentum).
The main challenge in constructing the codes lies in properly assigning the coefficients $\alpha_j,\beta_j$ in the code construction \eqref{eq:DefCodeSpace} to fulfil the KL conditions for the set of errors $\cE_t$ defined above.

\subsection{Spin codes} 
Spin codes  are designed to protect the information encoded in a spin-$J$ system \cite{gross}. To define it, denote by $\ket{J,m}$ an eigenstate of the $z$-component of the angular momentum operator $J_z$, where $J$ an integer or a half-integer. The Hilbert space of a spin-$J$ system has the basis $\{\ket{J,m},m\in\{-J,-J+1,\dots,J\}\}$.
 A $K$-dimensional \textit{spin code} is simply a $K$-dimensional subspace of a spin-$J$ system.  Throughout this paper, we study $2$-dimensional codes that have a basis of the form
\begin{align*}
    \ket{\bfc_0} = \sum_{m=-J}^J\alpha_m\ket{J,m} \quad \text{and} \quad \ket{\bfc_1} =  \sum_{m=-J}^J\beta_m\ket{J,m},
\end{align*}
where $\alpha_m,\beta_m$ are complex coefficients.
Such codes protect the encoded information against small-order isotropic rotations. A convenient description of the error
set is obtained relying on the Wigner-Eckart theorem \cite[Sec.~3.11]{sakurai} and can be written in terms of spherical tensors \cite{gross2}:
\begin{align}\label{eq:DefErrorSetSpin}
    \cE_t^\prime = \{\hat{E}_{\delta m}^{r}: |\delta m|\leq r\leq t \},
\end{align}
where
\begin{align*}
     \hat{E}_{\delta m}^{r} \propto \sum_{m=-J}^{J} C^{J,m+\delta m}_{J,m;r,\delta m}\ket{J,m+\delta m}\bra{J,m}.
\end{align*}
We say that a spin code can correct (detect) random rotation errors of order-$t$ if it corrects (detects) the errors from the set defined in \eqref{eq:DefErrorSetSpin}. Note that an AE code can be realized as a spin code since the error set in \eqref{eq:DefErrorSetSpin} forms a subset of the error set defined in \eqref{eq:DefErrorSet}. In principle, this error
correction condition can be used to define the distance of spin codes by postulating that the distance should be greater than
$t$ in the definition of $\cE_t'$, but we will not use this concept in our results. An analogous remark could be also made for AE codes.

Spin codes were introduced in \cite{gross}, which also constructed a family of codes 
correcting first-order random rotations. The primary goal in \cite{gross} was to design codes in which single-qubit Clifford operations are realized as logical unitaries. Subsequently, the authors of \cite{gross2} studied the problem of encoding a qubit 
into a collection of spin systems, extending the result of \cite{gross} for single-spin codes. 
Recently, \cite{kubischta2023notsosecret} studied a subclass of spin codes with binary dihedral symmetry.

\subsection{Permutation-invariant quantum codes} The AE codes that we construct are obtained by a mapping from
permutation-invariant codes. We start with the definition of this code family, which is due to \cite{ruskai-polatsek}.
An $n$-qubit permutation-invariant quantum code $\cC$ is a multi-qubit quantum code that is 
stabilized by the symmetric group $S_n$, i.e., for all $\ket{\psi}\in \cC$
    \begin{align*}
        g\ket{\psi} = \ket{\psi},\quad \text{for all $g\in S_n$}.
    \end{align*}
Such codes are conveniently described using Dicke states \cite{Dicke1,Dicke2,Dicke3}. A {\em Dicke state} $\ket{D^n_w}$ is a quantum state that is a linear combination of all $n$-qubit states with the same excitation. In other words,
\begin{align*}
    \ket{D^n_w}= \frac{1}{\sqrt{\binom{n}{w}}}\sum_{\substack{\bfx\in\{0,1\}^n\\|\bfx|=w}}\ket{\bfx},
\end{align*}
where $|\bfx|$ is the Hamming weight of the binary string $\bfx$. In this paper, we will primarily study two-dimensional permutation-invariant codes, which can be defined as follows:
\begin{definition}\label{def:PICodes}
    A two-dimensional permutation-invariant code is a subspace defined by basis vectors of the form
    \begin{align}\label{eq:PIDef}
        \ket{\bfc_0} = \sum_{j=0}^n\alpha_j\ket{D^n_j}\quad\text{and}\quad \ket{\bfc_1}=\sum_{j=0}^n\beta_j\ket{D^n_j},
    \end{align}
    where $\alpha_j,\beta_j\in\mathbb{C}$.
\end{definition}
Permutation-invariant codes were introduced by Pollatsek and Ruskai in their works \cite{ruskai-polatsek,ruskaiExchange}. The main motivation behind these works was to construct quantum codes that are 
resilient against spin exchange errors, which occur due to the nature of many-body systems formed of identical 
particles. Pursuing this goal, \cite{ruskai-polatsek,ruskaiExchange} introduced a family of $2$-dimensional codes that can correct a single error. 
Subsequently, Ouyang \cite{ouyangPI} observed that codewords of any permutation-invariant code form ground states of a 
Heisenberg ferromagnet. Motivated by this, he introduced the first family of permutation-invariant codes capable of correcting arbitrary number of errors. Ouyang's codes are defined by three integer parameters, $g$, $n$, and $u$, 
and are hence known as \eczoo[GNU codes]{gnu_permutation_invariant}. Codes from this family encode a single qubit into $gnu$ physical qubits.

Thereafter, Ouyang studied permutation-invariant codes in higher dimensions \cite{ouyangHigherDimensions, ouyangQudit}. He showed in \cite{ouyangQudit}, Theorem 6.7 that there exists a $k$-dimensional permutation-invariant code that can correct $t$ errors and has length $(k-1)(2t+1)^2$. In the same paper, he introduced an explicit method for constructing these codes.

More recently, the authors of \cite{aydin2023family} introduced necessary and sufficient conditions for a  permutation-invariant code to correct $t$ errors (this result is cited as Theorem \ref{thm:PIConditions} in the Appendix). They further introduced a new family of codes that can correct an arbitrary number of errors. The shortest code from this family has length $(2t+1)^2-2t$,
which compares favorably with previous construction. Namely, the shortest $t$-error-correcting codes of \cite{ouyangPI} have length $(2t+1)^2$ (and the shortest single-error-correcting code in \cite{ruskai-polatsek} has length 7). 

In this paper, we show that permutation-invariant codes with distance $d=2t+1$ can be mapped to AE codes that can correct up to order-$t$ transitions. This enables us to construct more efficient codes by using the previously known permutation-invariant codes. We also use the correspondence between spin codes and the permutation-invariant codes, previously established in \cite{gross2,kubischta2023notsosecret}. Combining this with our result, we conclude that any spin code can be realized as an AE code with similar error correction properties.

\section{Code Construction}

\subsection{Conditions for error correction}
The error correction conditions for AE codes introduced in \cite{AEcodes} are based on the assumption of a sufficiently large 
spacing between excitation levels. Its authors only consider subspaces formed as superpositions of basis states $\{\ket{J,m}\}$
that are separated by at least $2t+1$ units in $m$ (see Appendix~\ref{sec: example} for the detais of their approach). Due to this specific choice of the codespace, the off-diagonal KL 
conditions are automatically satisfied. At the same time, as noted above, this restriction on the excitation levels of the basis states may force us to construct codes with larger total angular momentum. The {\em main technical result} of this work yields generalized and simplified error correction conditions that do not require any assumptions about basis states. It is stated in the next theorem, whose proof is deferred to Appendix~\ref{sec:main thm proof}.

\begin{theorem}\label{theorem:Main}
    Let $\cC$ be an AE quantum error correction code with real coefficients $\alpha_j$ and $\beta_j$, $j=1,\ldots,n$ as given in \eqref{eq:DefCodeSpace}. If equalities
    \begin{align*}
        &\text{\rm(C1)}\quad \sum_{j=0}^n\alpha_j\beta_j = 0;\\
        &\text{\rm(C2)}\quad \sum_{j=0}^n\alpha_j^2 = \sum_{j=0}^n\beta_j^2 = 1;\\
        &\text{\rm(C3) \;For all $ 0\leq  a,b  \leq t^\prime$},\\
        & \hspace*{0.5in}\sum_{j=0}^n\frac{\binom{n-2t}{j}}{\sqrt{\binom{n}{j+a} \binom{n}{j+b}}} \alpha_{j+a}\beta_{j+b}=0;\\
        &\text{\rm(C4) \;For all $ 0\leq  a,b  \leq t^\prime$},\\
        & \hspace*{0.5in}\sum_{j=0}^n\frac{\binom{n-2t}{j}}{\sqrt{\binom{n}{j+a} \binom{n}{j+b}}}\left(\alpha_{j+a}\alpha_{j+b}-\beta_{j+a}\beta_{j+b}\right)=0
    \end{align*}
    hold for $t^\prime=2t$, then code $\cC$ corrects order-$s$ transitions for all $s\le t$. If they hold for $t^\prime=t$, then it detects up to order-$t$ transitions. 
\end{theorem}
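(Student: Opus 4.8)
The plan is to expand the Knill--Laflamme conditions for the full error set $\cE_t$ directly in the $\ket{n/2,m}$ basis and to reduce them, term by term, to the four conditions (C1)--(C4). First I would substitute the explicit form \eqref{eq:DefErrorSet1} of the error operators into the matrix element $\bra{\bfc_i}(\hat{E}^{r_1,\delta J_1}_{\delta m_1})^\dagger \hat{E}^{r_2,\delta J_2}_{\delta m_2}\ket{\bfc_{i'}}$, where now $i,i'\in\{0,1\}$ index the codewords. The product $(\hat{E}^{r_1,\delta J_1}_{\delta m_1})^\dagger \hat{E}^{r_2,\delta J_2}_{\delta m_2}$ is a double sum over intermediate states $\ket{J+\delta J_1,\cdot}$ and $\ket{J+\delta J_2,\cdot}$; orthogonality of the $\ket{J+\delta J,M}$ collapses this to a single sum and forces $\delta J_1=\delta J_2=:\delta J$ together with a fixed index shift $\delta:=\delta m_1-\delta m_2$. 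After inserting the codeword expansions \eqref{eq:DefCodeSpace} and writing $m=j-n/2$, every KL matrix element becomes a single weighted sum $\sum_j W_j\,\gamma^{(i)}_j\gamma^{(i')}_{j+\delta}$, where $\gamma^{(0)}=\alpha$, $\gamma^{(1)}=\beta$, and $W_j$ is a product of two Clebsch--Gordan coefficients absorbing the normalization constants.

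I would then split the resulting equations into two families. The off-diagonal requirements ($i\neq i'$) ask that $\sum_j W_j\,\alpha_j\beta_{j+\delta}=0$, which I expect to match, after reindexing, the cross-term conditions (C1) and (C3); the diagonal requirements ($i=i'$) ask that the $\alpha\alpha$ and $\beta\beta$ sums agree, i.e.\ $\sum_j W_j(\alpha_j\alpha_{j+\delta}-\beta_j\beta_{j+\delta})=0$, matching (C2) and (C4). Since $|\delta m_1|,|\delta m_2|\le t$, the shift obeys $|\delta|\le 2t$, which is exactly the range of $a-b$ covered by the indices $0\le a,b\le t'$ with $t'=2t$; the orthonormality encoded in (C1), (C2) simultaneously disposes of the terms with $\delta=0$ and fixes the common value $g_{ab}$. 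For the detection statement only single operators $\hat{E}^{r,\delta J}_{\delta m}$ with $r\le t$ occur, so $|\delta m|\le t$ and the smaller range $t'=t$ suffices.

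The crux, and the step I expect to be the main obstacle, is a combinatorial lemma that rewrites the weight $W_j$ in terms of the binomial kernel appearing in (C3)--(C4). Using a closed form (van der Waerden / Racah) for the coefficients $C^{J+\delta J,m+\delta m}_{J,m;r,\delta m}$ with the large spin $J=n/2$, the factorials $(J+m)!=j!$ and $(J-m)!=(n-j)!$ supply exactly the factors $\binom{n}{j+a}^{-1/2}$ and $\binom{n}{j+b}^{-1/2}$, while the finite hypergeometric summation over the small spins $r_1,r_2\le t$ contributes a factor that is polynomial in $j$ of degree at most $2t$. The task is to show that this polynomial, multiplied by $\binom{n}{j}$, equals $\binom{n-2t}{j}$ times a linear combination of shifts $j\mapsto j+a,\ j+b$, so that each $W_j$-sum becomes a fixed linear combination of the kernels $\frac{\binom{n-2t}{j}}{\sqrt{\binom{n}{j+a}\binom{n}{j+b}}}$. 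Establishing this identity uniformly — in particular obtaining the single binomial $\binom{n-2t}{j}$ independent of the individual orders $r_1,r_2$, by padding the degree-$2t$ polynomial appropriately — is where the careful bookkeeping lies.

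Once the lemma is in place the conclusion is immediate: every KL matrix element of $\cE_t$ is a linear combination of the left-hand sides of (C1)--(C4), so their vanishing (with $t'=2t$) forces the full Knill--Laflamme conditions and hence correction of all order-$s$ transitions with $s\le t$, while restricting to single errors gives detection already at $t'=t$. As a consistency check I would verify that the conditions so obtained coincide with the permutation-invariant KL conditions of Theorem~\ref{thm:PIConditions} under the identification $\ket{D^n_j}\leftrightarrow\ket{n/2,j-n/2}$, which is the conceptual reason the AE conditions collapse onto the same binomial kernel.
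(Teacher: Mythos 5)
Your overall strategy is exactly the paper's (Appendix~\ref{sec:main thm proof}): substitute \eqref{eq:DefErrorSet1} into the KL matrix elements, use orthogonality of the $J+\delta J$ sectors to force $\delta J_1=\delta J_2$ and collapse everything to a single $j$-sum with fixed offset $\delta m_1-\delta m_2$, split the requirements into off-diagonal/(C3) and diagonal/(C4) families, and reduce the Clebsch--Gordan weights to the binomial kernels of (C3)--(C4). You also correctly locate the crux in a combinatorial lemma about those weights. The problem is that this lemma is essentially the entire technical content of the paper's proof (its Lemmas \ref{Lemma:Combinatorics1}, \ref{Theorem:Combinatorics}, \ref{lemma:Combinatorics2} and \ref{lemma:Clebsch}), you do not prove it, and the specific structure you conjecture for it is false, so the reduction as you describe it would not go through.

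Concretely, you claim that after extracting $\binom{n}{j+a}^{-1/2}\binom{n}{j+b}^{-1/2}$ the leftover weight is a polynomial in $j$ of degree at most $2t$, to be re-expanded against $\binom{n}{j}$. Already the simplest nontrivial KL pair refutes this: take $t=1$ and pair $\hat{E}^{1,0}_{-1}$ with the identity $\hat{E}^{0,0}_{0}\in\cE_1$. The explicit binomial-sum form of the Clebsch--Gordan coefficient (the paper's Eq.~\eqref{eq:LemmaClebsch_1}) gives a weight $W_j\propto \binom{n-1}{j+1}\big/\sqrt{\binom{n}{j+2}\binom{n}{j+1}}$. The leftover $\binom{n-1}{j+1}$ is a binomial with $j$ in its lower index, not a polynomial in $j$; nor is it $P(j)\binom{n}{j}$ for any $P$ of degree $\le 2t$: the ratio equals $\frac{(n-j)(n-j-1)}{n(j+1)}$, and if it agreed with such a $P$ on $\{0,\dots,n\}$ (take $n\ge 3$), clearing denominators would give a polynomial identity $(n-j)(n-j-1)=nP(j)(j+1)$, which fails at $j=-1$. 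What is true --- and is what the paper proves --- is that the leftover expands as a $j$-\emph{independent} linear combination of shifted binomials $\binom{n-2t}{j-s}$ with shifts of \emph{both} signs (here $\binom{n-1}{j+1}=\binom{n-2}{j}+\binom{n-2}{j+1}$), after which reindexing $j$ yields kernels with indices $(a'+s,b'+s)$; one must then also check that these indices stay in $[0,2t]$ and that the truncated summation range can be extended to $0\le j\le n$, which the paper verifies explicitly. Establishing that expansion in general requires the Vandermonde rewriting of the CG coefficient (Lemma \ref{lemma:Clebsch}) plus the expansion of $\binom{\nb}{j+z_1}\binom{\nb}{j+z_2}/\binom{\nb+q}{j+q}$ (Lemma \ref{lemma:Combinatorics2}), which in turn rests on the nontrivial cited identity of Lemma \ref{Theorem:Combinatorics}; none of this is routine padding. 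A smaller inaccuracy: (C1)--(C2) are orthonormality of the code basis, and they do not dispose of the $\delta m_1=\delta m_2$ cases of the KL conditions --- those are the $a=b$ instances of (C3)--(C4).
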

\noindent{\em Remark:} We assume by definition that $\frac{\alpha_k}{\sqrt{\binom{n}{k}}}=\frac{\beta_k}{\sqrt{\binom{n}{k}}}=0$ if $k>n$.

Solving the set of equations introduced in conditions (C1)-(C4) we can construct AE codes that can correct up to order-$t$ transitions. In Appendix~\ref{sec: example}  we give details of the AE code construction implied by Theorem \ref{theorem:Main}, noting that this recovers the construction obtained earlier in \cite{AEcodes} as a particular case.

The following proposition is based on the connection to permutation-invariant codes
established by Theorems \ref{theorem:Main} and \ref{thm:PIConditions}.

\begin{proposition}\label{prop:PICodes}
    Let $\ket{D^n_j}$ be a Dicke state. Define the mapping 
    \begin{align}\label{eq:Gmap}
        \ket{D^n_j} \stackrel{e}{\longmapsto} \ket{n/2,j-n/2}.
    \end{align}
    Then, any $k$-dimensional $t$-error correcting permutation-invariant quantum code with real coefficients and length $n$ is mapped by $e$ to a $k$-dimensional AE code with total angular momentum $J=n/2$ that can correct up to order-$t$ transitions, and can detect up to order-$2t$ transitions.
\end{proposition}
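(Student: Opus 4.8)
The plan is to show that the equations characterizing a $t$-error-correcting permutation-invariant code are precisely conditions (C1)--(C4) of Theorem~\ref{theorem:Main} evaluated at $t'=2t$, and then to invoke Theorem~\ref{theorem:Main} to transport error correction across the mapping $e$. The starting observation is that $e$ preserves expansion coefficients: writing the permutation-invariant codewords as $\ket{\bfc_i}=\sum_{j=0}^n c^{(i)}_j\ket{D^n_j}$, their images are $e(\ket{\bfc_i})=\sum_{j=0}^n c^{(i)}_j\ket{n/2,\,j-n/2}$, with the very same coefficients $c^{(i)}_j$ now read in the $\ket{J,m}$ basis at $J=n/2$. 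Since both $\{\ket{D^n_j}\}$ and $\{\ket{n/2,\,j-n/2}\}$ are orthonormal, $e$ extends to an isometry, so inner products---and hence any polynomial identity imposed on the tuples $(c^{(i)}_j)_j$---transfer verbatim between the two settings.

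Next I would write out the conditions of Theorem~\ref{thm:PIConditions} that a $t$-error-correcting permutation-invariant code must satisfy, which hold by hypothesis since our code corrects $t$ errors. For each pair of codewords these consist of the orthonormality relations together with the vanishing, for $a,b$ in the relevant range, of sums carrying the combinatorial weight $\binom{n-2t}{j}/\sqrt{\binom{n}{j+a}\binom{n}{j+b}}$ against bilinear expressions in the coefficients. The key step is to match these term by term with (C1)--(C4): orthogonality is (C1), the norms give (C2), the mixed $\alpha\beta$ sums give (C3), and the $\alpha\alpha-\beta\beta$ sums give (C4), all carrying the identical weight $\binom{n-2t}{j}/\sqrt{\binom{n}{j+a}\binom{n}{j+b}}$ and the same index range $0\le a,b\le 2t=t'$ forced by correcting $t$ errors. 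With this identification in place, Theorem~\ref{theorem:Main} applied at $t'=2t$ shows at once that $e(\cC)$ corrects order-$s$ transitions for all $s\le t$.

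To pass from the two-dimensional statements of Theorems~\ref{theorem:Main} and~\ref{thm:PIConditions} to general dimension $k$, I would use the standard reduction of the Knill--Laflamme conditions to pairs of codewords: the full conditions $\bra{\bfc_i}\hat E_a^\dagger\hat E_b\ket{\bfc_l}=\delta_{il}g_{ab}$ hold exactly when, for every pair $(i,l)$, the off-diagonal block vanishes and the two diagonal blocks coincide, which is precisely the statement that the two-dimensional subcode spanned by $\ket{\bfc_i}$ and $\ket{\bfc_l}$ is correctable. Applying the two-dimensional correspondence to each of the $\binom{k}{2}$ pairs then yields the $k$-dimensional conclusion. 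For detection I would invoke the general principle that correcting order-$t$ transitions forces detection of order-$2t$: every AE transition operator of order at most $2t$ lies in $\mathrm{span}\{\hat E_a^\dagger\hat E_b : \hat E_a,\hat E_b\in\cE_t\}$ (the operator counterpart of coupling two rank-$\le t$ tensors, whose Clebsch--Gordan decomposition contains all ranks up to $2t$), so each order-$2t$ detection matrix element is a linear combination of the already-established quantities $\delta_{il}g_{ab}$ and is therefore proportional to $\delta_{il}$.

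The step I expect to be the main obstacle is the term-by-term identification of the second paragraph. One must verify that the combinatorial weights and, especially, the index ranges produced by the permutation-invariant conditions coincide exactly with those of (C1)--(C4) at $t'=2t$, including the boundary convention (stated in the Remark after Theorem~\ref{theorem:Main}) that $\alpha_\ell/\sqrt{\binom{n}{\ell}}=\beta_\ell/\sqrt{\binom{n}{\ell}}=0$ whenever $\ell>n$, which is what makes the out-of-range terms drop out harmlessly. This matching is where the nontrivial content of Theorem~\ref{theorem:Main} is actually consumed; once it is granted, the pairwise reduction and the detection argument are routine.
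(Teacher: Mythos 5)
Your argument for the correction part is essentially the paper's own proof: transfer the coefficients through $e$, note that a $t$-error-correcting permutation-invariant code satisfies (C1)--(C4) with $t'=2t$ by Theorem~\ref{thm:PIConditions}, invoke Theorem~\ref{theorem:Main}, and reduce the $k$-dimensional case to pairs of basis vectors (the paper does exactly this by picking two arbitrary basis codewords). The ``main obstacle'' you anticipate --- the term-by-term matching of the permutation-invariant conditions with (C1)--(C4) --- is in fact a non-issue: Theorem~\ref{thm:PIConditions}, as cited from \cite{aydin2023family}, is stated verbatim as ``the code corrects $t$ errors if and only if the coefficients satisfy Conditions (C1)--(C4) of Theorem~\ref{theorem:Main} with $t'=2t$,'' so the identification holds by fiat and nothing needs to be re-derived.

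Where you genuinely depart from the paper is the detection claim, and there your argument has a flaw. You assert that every AE transition operator of order at most $2t$ lies in $\mathrm{span}\{\hat{E}_a^\dagger\hat{E}_b : \hat{E}_a,\hat{E}_b\in\cE_t\}$. This is false for the operators $\hat{E}^{r,\delta J}_{\delta m}$ with $\delta J\neq 0$: every product $\hat{E}_a^\dagger\hat{E}_b$ maps the spin-$J$ sector back into itself (it is nonzero on the code space only when $\delta J_a=\delta J_b$), whereas an operator with $\delta J\neq 0$ maps the spin-$J$ sector into the orthogonal spin-$(J+\delta J)$ sector, so it cannot lie in that span. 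The conclusion survives, but for a different reason: for $\delta J\neq 0$ the matrix elements $\bra{\bfc_i}\hat{E}^{r,\delta J}_{\delta m}\ket{\bfc_l}$ vanish identically by sector orthogonality, so detection is automatic there, and your span argument is only needed --- and, modulo non-vanishing of the relevant reduced matrix elements, only valid --- for the $\delta J=0$ operators. The paper sidesteps all of this: Theorem~\ref{theorem:Main} already contains a detection clause, and the conditions supplied by the permutation-invariant code (weight $\binom{n-2t}{j}$, ranges $0\le a,b\le 2t$) are precisely the detection conditions at order $2t$, so detection of order-$2t$ transitions is read off directly with no operator-span reasoning. You should either patch your argument with the sector-orthogonality observation plus a justification of the span claim for $\delta J=0$, or simply cite the detection clause of Theorem~\ref{theorem:Main} as the paper does.
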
    
This result leads to the following useful observation. The authors of \cite{AEcodes} noticed
that the code family they constructed to correct up to order-$t$ transitions recovers the GNU codes of Ouyang \cite{ouyangPI}. Proposition \ref{prop:PICodes} implies that {\em any} permutation-invariant code capable of correcting $t$ errors can be mapped to an AE code that corrects up to order-$t$ transitions. Therefore, leveraging the existing literature on permutation-invariant codes, we can construct more efficient AE codes than previously known. With this in mind, using Construction \ref{constructionGmdelta} along with the mapping defined in \eqref{eq:Gmap}, we introduce the following family of AE codes.

\begin{construction}\label{cons:gmdeltaCodes}
    Let $g,m,\delta $ be nonnegative integers, and $\epsilon\in\{-1,+1\}$. Define an AE code $\cQ_{g,m,\delta,\epsilon}$ with the logical computational basis
    \begin{align*}
        &\ket{\bfc_0} =\sum_{\substack{\text{$l$ {\rm even}}\\0\leq l \leq m}} \gamma b_l\;\ket{n/2,gl-n/2} + 
        \sum_{\substack{\text{$l$ {\rm odd}}\\0\leq l \leq m}} \gamma b_l\;\ket{n/2,n/2-gl},\\
        &\ket{\bfc_1} = \sum_{\substack{\text{$l$ {\rm odd}}\\0\leq l \leq m}} \gamma b_l\;\ket{n/2,gl-n/2} +\epsilon 
        \sum_{\substack{\text{$l$ {\rm even}}\\0\leq l \leq m}} \gamma b_l\;\ket{n/2,n/2-gl},
    \end{align*}
    where $ n=2gm+\delta+1, $
    $
    b_l:=\sqrt{{\binom{m}{l}}/{\binom{n/g-l}{m+1} }},
    $
and $ \gamma =  \sqrt{\binom{n/(2g)}{m} \frac{n-2gm}{g(m+1)} } $ is the normalizing factor\footnote{Binomial coefficients with non-integer entries are defined in Eq.~\eqref{eq: bin} below.}.
\end{construction}

In the following theorem we explicitly state the error correction properties of codes in Construction \ref{cons:gmdeltaCodes}.
\begin{theorem}\label{theorem:ErrorCorrection}
    Let $m,g,\delta,t$ be nonnegative integers. If $m\geq t$, $\delta\geq 2t$, and
    \begin{align*}
        (g\geq 2t,\epsilon=-1) \quad\text{or}\quad (g\geq 2t+1,\epsilon=+1),
    \end{align*}
    then the code $\cQ_{g,m,\delta,\epsilon}$ is hosted in a system with total angular momentum $J=\frac{2gm+\delta+1}{2}$ and corrects up to order-$t$ transitions.
\end{theorem}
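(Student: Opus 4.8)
The plan is to prove Theorem~\ref{theorem:ErrorCorrection} by reducing it to the permutation-invariant setting and then invoking Proposition~\ref{prop:PICodes}. First I would apply the inverse of the mapping \eqref{eq:Gmap} to the codewords of $\cQ_{g,m,\delta,\epsilon}$. Since $\ket{n/2,gl-n/2}=e(\ket{D^n_{gl}})$ and $\ket{n/2,n/2-gl}=e(\ket{D^n_{n-gl}})$ with $n=2gm+\delta+1$, the preimage $\cC:=e^{-1}(\cQ_{g,m,\delta,\epsilon})$ is a two-dimensional permutation-invariant code of length $n$ with the same real coefficients $\gamma b_l$, supported on the Dicke levels $\{gl:0\le l\le m\}$ and $\{n-gl:0\le l\le m\}$; this is precisely the code produced by Construction~\ref{constructionGmdelta}. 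A useful preliminary observation is that $\ket{\bfc_0}$ and $\ket{\bfc_1}$ occupy disjoint excitation levels: the even/odd split sends the forward levels $\{gl\}$ and the reflected levels $\{n-gl\}$ to the two codewords without overlap, and a forward level can never coincide with a reflected one because $l+l'\le 2m<2m+(\delta+1)/g$. Consequently the orthonormality requirements, conditions~(C1)--(C2) of Theorem~\ref{theorem:Main}, hold automatically once the coefficients are normalized by $\gamma$, and the entire burden falls on the diagonal and off-diagonal conditions (C3)--(C4) for $t'=2t$, equivalently on the permutation-invariant conditions of Theorem~\ref{thm:PIConditions}.

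Next I would verify the diagonal conditions (C4), which assert that the weighted overlaps $\sum_j \binom{n-2t}{j}\alpha_{j+a}\alpha_{j+b}/\sqrt{\binom{n}{j+a}\binom{n}{j+b}}$ agree with the corresponding $\beta$-sums. Because consecutive support levels within a block are spaced by at least $2g\ge 4t$, any shift pair $(a,b)$ with $a,b\le 2t$ can keep both $j+a$ and $j+b$ on the $\alpha$-support only when they fall on a single common level, so the sums decompose level-by-level; the hypothesis $\delta\ge 2t$ plays the same decoupling role across the central gap that separates the forward block from the reflected block, and $m\ge t$ guarantees that enough support points are present for the generalized-binomial weights $b_l=\sqrt{\binom{m}{l}/\binom{n/g-l}{m+1}}$ to satisfy the combinatorial identity underlying Construction~\ref{constructionGmdelta} that equalizes the two weight distributions. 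These identities are exactly the content of Construction~\ref{constructionGmdelta}, so (C4) carries over directly.

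The main obstacle, and the only place the sign $\epsilon$ enters, is the off-diagonal condition (C3). A short counting argument shows that, thanks to $\delta\ge 2t$, no cross term can link a forward level of one codeword to a reflected level of the other, so only the within-block overlaps survive; and for shifts with $|a-b|\le 2t$ these vanish termwise as soon as $g\ge 2t+1$, which settles the case $\epsilon=+1$. The boundary case $g=2t$ is delicate: here the extremal shifts $|a-b|=2t$ with adjacent levels do produce nonzero forward--forward and reflected--reflected contributions, and I expect these to cancel against each other precisely when $\epsilon=-1$, the reflection $j\mapsto n-j$ pairing the two blocks with the overall sign $\epsilon$. Tracking this cancellation is subtle because $\binom{n-2t}{j}$ is not symmetric under $j\mapsto n-j$ --- it transforms into $\binom{n-2t}{j-2t}$ --- so the required cancellation only closes once the extremal shift values are folded in; confirming this is the heart of the proof. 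Once (C3)--(C4) are verified for $t'=2t$, the code $\cC$ corrects $t$ errors, and Proposition~\ref{prop:PICodes} immediately yields that $\cQ_{g,m,\delta,\epsilon}=e(\cC)$ is an AE code of total angular momentum $J=n/2=(2gm+\delta+1)/2$ correcting up to order-$t$ transitions, as claimed.
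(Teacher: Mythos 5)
Your overall route coincides with the paper's: pull $\cQ_{g,m,\delta,\epsilon}$ back through the map $e$ of \eqref{eq:Gmap} to the permutation-invariant code $\cT_{g,m,\delta,\epsilon}$ of Construction~\ref{constructionGmdelta}, establish that this code corrects $t$ errors, and then apply Proposition~\ref{prop:PICodes}. The paper closes the middle step in one line by citing Theorem~\ref{theoremGMdelta} (Theorem 5.3 of \cite{aydin2023family}), which asserts exactly that $\cT_{g,m,\delta,\epsilon}$ corrects $t$ errors under the stated hypotheses on $m,\delta,g,\epsilon$. You instead set out to re-verify conditions (C1)--(C4) of Theorem~\ref{thm:PIConditions} from scratch, and that verification is not completed, so as written your argument has a genuine gap.

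Concretely, two steps are missing. First, your treatment of (C4) is circular: after correctly observing that the support spacing ($2g$ within a block, more than $2t$ across the central gap) reduces (C4) to the diagonal case $a=b$, i.e.\ to the $2t+1$ equations $\sum_{j}\binom{n-2t}{j}\bigl(\alpha_{j+a}^2-\beta_{j+a}^2\bigr)/\binom{n}{j+a}=0$ for $0\le a\le 2t$, you assert these hold because they are ``exactly the content of Construction~\ref{constructionGmdelta}.'' A construction is a definition, not a theorem; the fact that the specific coefficients $\gamma b_l$ satisfy these equations is the nontrivial substance of Theorem~\ref{theoremGMdelta}, proved in \cite{aydin2023family} by explicit computation with the generalized binomial coefficients. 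This gap affects both sign cases, so even your ``settled'' case $g\ge 2t+1$, $\epsilon=+1$ is only partially established (there you do verify (C3) by termwise vanishing, and (C1)--(C2) are fine). Second, for the boundary case $g=2t$, $\epsilon=-1$ you explicitly leave the cancellation of the extremal $|a-b|=2t$ cross terms in (C3) as an expectation rather than a proof --- and this is precisely where $\epsilon=-1$ and the exact values of $b_l$ enter, and precisely the parameter choice behind the paper's most efficient examples such as $\cQ_{2,1,2,-}$ and $\cQ_{4,2,4,-}$. To repair the argument, either carry out these two computations (which amounts to reproving Theorem 5.3 of \cite{aydin2023family}) or simply invoke Theorem~\ref{theoremGMdelta}, which the paper quotes in Appendix~\ref{sec:PI-codes} for exactly this purpose.
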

\begin{proof} By combining Theorem \ref{theoremGMdelta} with Proposition \ref{prop:PICodes}. 
\end{proof}

The following two examples are constructed using the result of Theorem \ref{theorem:ErrorCorrection}.
\begin{example}\label{example:7qubit}
    Let $g=2,m=1,\delta=2$, and $\epsilon=-1$. Then the code $\cQ_{2,1,2,-}$ defined via its basis
    \begin{equation}\label{eq:7qubitCode}
\left.
    \begin{aligned}
        \ket{\bfc_0} &= \sqrt{\frac{3}{10}}\, \lket{\frac{7}{2},-\frac{7}{2}} + \sqrt{\frac{7}{10}}\, \lket{\frac{7}{2},\frac{3}{2}} \\
        \ket{\bfc_1} &= \sqrt{\frac{7}{10}}\, \lket{\frac{7}{2},-\frac{3}{2}} - \sqrt{\frac{3}{10}}\, \lket{\frac{7}{2},\frac{7}{2}}
    \end{aligned}\;\;\right\} 
    \end{equation}
    can correct a single transition error.  Note that this requires a system with total angular momentum $J=7/2$, which is less than required by the best code introduced in \cite{AEcodes}. 
\end{example}
\begin{example}
     Suppose $g=4,m=2,\delta=4$, and $\epsilon=-1$. Then, the code $\cQ_{4,2,4,-}$ defined via its basis
     \begin{align*}
         &\ket{\bfc_0} = 
         \sqrt{\frac{5}{68}}\, \lket{\frac{21}{2},-\frac{21}{2}}+ \sqrt{\frac{7}{12}}\, \lket{\frac{21}{2},-\frac{5}{2}} +  \sqrt{\frac{35}{102}}\, \lket{\frac{21}{2},\frac{13}{2}},\\
         &\ket{\bfc_1} = \sqrt{\frac{35}{102}}\, \lket{\frac{21}{2},
         -\frac{13}{2}}- \sqrt{\frac{7}{12}}\, \lket{\frac{21}{2},\frac{5}{2}} -  \sqrt{\frac{35}{102}}\, \lket{\frac{21}{2},\frac{21}{2}}         
     \end{align*}
corrects single and double transition errors. 
     Note that this requires a system with total angular momentum $J=21/2$, which again outperforms the best
     previously known AE codes capable of correcting up to double transition errors.
\end{example}

\subsection{Our construction and prior results}
Note that the code family defined in Construction \ref{cons:gmdeltaCodes} has the property that the values of $z$-axis projection of the momentum are symmetric about $n/2$. The authors of \cite{AEcodes} call such codes \textit{counter-symmetric}. The most efficient codes from this family that can correct up to order-$t$ transitions are obtained by setting $g=2t$, $m=t$, $\delta=2t$, and $\epsilon=-1$. These codes can be hosted in a system with total angular momentum $J=(2t+1)^2/2-t$. Previously known AE codes 
with comparable correction properties (up to order-$t$ transitions) require a total angular momentum of at least $J=(2t+1)^2/2$, which is $t$ more than for our code family. Furthermore, the code $\cQ_{g,\frac{m-1}{2},g-1,+}$ coincides with the counter-symmetric codes introduced for higher-order transitions in \cite{AEcodes} (See Proposition 5.4 in \cite{aydin2023family}). Hence, our codes not only generalize the previously known families of codes for higher-order transitions, but also introduce many new AE codes that require a physical system with lower total angular momentum.   

In \cite{aydin2023family}, we also introduced another code family by generalizing Pollatsek and Ruskai's single-error-correcting permutation-invariant codes \cite{ruskai-polatsek}. In particular, we found a $((19,2,5))$ permutation-invariant code by solving the set of equations derived from conditions (C1)-(C4) for a specific orientation of the support of the codewords. By Proposition \ref{prop:PICodes}, we conclude that there exists an AE code capable of correcting up to double-transition errors, with a host system of total angular momentum $J=19/2$. The best previously known code of this kind is hosted by a system with total angular momentum $J=25/2$, so our construction yields a more efficient solution.
\subsection{Encoding many qubits}
Proposition \ref{prop:PICodes} also enables construction of AE codes in higher dimensions. Noting this, we
leverage known results on many-qubit permutation-invariant codes. One such construction was obtained in \cite[Thm.~5.2]{ouyangQudit}, which 
introduced an explicit way to construct $t$-error correcting, $k$-dimensional permutation-invariant code with length $n\geq (k-1)(2t+1)^2$. By Proposition \ref{prop:PICodes}, one can therefore construct $k$-dimensional AE codes that require a total angular momentum $J=\frac{(k-1)(2t+1)^2}{2}$. As a proof of concept, in the following example we construct an AE code with more than two logical basis states: The AE code with basis states
\begin{align*}
    &\ket{\bfc_0}= \frac{1}{4}\, \lket{\frac{27}{2},-\frac{27}{2}} + \frac{\sqrt{12}}{4}\, \, \lket{\frac{27}{2},-\frac{3}{2}} + \frac{\sqrt{3}}{4}\,\, \lket{\frac{27}{2},\frac{21}{2}},\\
    &\ket{\bfc_1}=\frac{\sqrt{3}}{4}\, \lket{\frac{27}{2},-\frac{21}{2}} + \frac{\sqrt{12}}{4}\, \lket{\frac{27}{2},\frac{3}{2}}+\frac{1}{4}\, \lket{\frac{27}{2},\frac{27}{2}},\\
    &\ket{\bfc_2}=\frac{\sqrt{6}}{4}\, \lket{\frac{27}{2},-\frac{15}{2}} + \frac{\sqrt{10}}{4}\, \lket{\frac{27}{2},\frac{9}{2}},\\
    &\ket{\bfc_3}= \frac{\sqrt{10}}{4}\, \lket{\frac{27}{2},-\frac{9}{2}} + \frac{\sqrt{6}}{4}\, \lket{\frac{27}{2},\frac{15}{2}}
\end{align*}
encodes two bits of information into a system with total angular momentum $J=27/2$. It corrects up to a single transition error and detects up to order-two transitions. This result is obtained by applying the mapping $e$ to the permutation-invariant code from Example 6.5 in \cite{ouyangQudit}.
To the best of our knowledge, all previously known AE codes encode a single qubit of information, so this example seems
to be the first one of AE codes in higher dimensions.  

\subsection{Other constructions} Reliance on permutation-invariant codes is not the only possible way to construct AE codes: another way to obtain them is starting from the known 
constructions of spin codes.  Paper \cite{gross2} was the first to observe the relationship between spin codes and permutation-invariant codes (see their Section VII.). Subsequently, \cite{kubischta2023notsosecret} presented a rigorous argument connecting spin codes and permutation-invariant codes with the same distance. The authors called the next lemma {\em Dicke bootstrap}.
\begin{lemma}[\cite{kubischta2023notsosecret}, Lemma 2]\label{lemma:SpinCodesToPICodes}
    Define the mapping
    \begin{align}\label{eq:Hmap}
        \ket{J,m} \stackrel{h}{\longmapsto} \ket{D^{2J}_{J-m}}.
    \end{align}
 Then a spin code with angular momentum $J$ that corrects random rotations of order (up to) $t$ corresponds under mapping $h$
 to a permutation-invariant code of length $2J$ that corrects $t$ errors. 
\end{lemma}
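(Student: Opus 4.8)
The plan is to realize $h$ as the canonical $SU(2)$-equivariant isometry between the spin-$J$ Hilbert space and the fully symmetric subspace $\mathrm{Sym}^n(\complex^2)$ of $n=2J$ qubits, and then to match the two error models by comparing the operator spans they generate. First I would record that the Dicke states $\{\ket{D^n_w}\}_{w=0}^{n}$ form an orthonormal basis of $\mathrm{Sym}^n(\complex^2)$, which carries precisely the spin-$n/2$ irreducible representation of the collective operators $J_z=\tfrac12\sum_i Z_i$ and $J_\pm=\sum_i\sigma_i^\pm$, with $\ket{D^n_w}$ the $J_z$-weight vector of eigenvalue $n/2-w$. Hence the map $h$ of \eqref{eq:Hmap} is a unitary intertwiner, and since every superposition of Dicke states is fixed by $S_n$, the image of any spin code is automatically a permutation-invariant code in the sense of Definition~\ref{def:PICodes}, of length $2J$; conversely every length-$2J$ permutation-invariant code pulls back through $h$. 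This disposes of the structural half of the correspondence, so what remains is to show that the error-correction properties transfer.

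Next I would reduce both notions of correction to a single linear (``detection'') condition against one operator space. The correction requirement $\bra{\bfc_i}\hat E_a^\dagger \hat E_b\ket{\bfc_j}=\delta_{ij}g_{ab}$ is sesquilinear in $(\hat E_a,\hat E_b)$, so correcting an error set $\cE$ is equivalent to imposing $\bra{\bfc_i}F\ket{\bfc_j}=\delta_{ij}g_F$ for every $F$ in the span of the products $\{\hat E_a^\dagger \hat E_b\}$. For the spin code the errors of \eqref{eq:DefErrorSetSpin} are, by the Wigner-Eckart theorem \cite{sakurai}, proportional to the spherical tensors $T^{(r)}_{\delta m}$ on the spin-$J$ irrep; for $r\le t$ their products span the space $\mathcal{S}_{2t}$ of all rank-$\le 2t$ tensors. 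For the permutation-invariant code the errors are weight-$\le t$ Paulis, whose products range over all weight-$\le 2t$ Paulis; and because $\pi P=P\pi=P$ for the symmetric projector $P$ and every $\pi\in S_n$, only the compressions $PEP=P\bar E P$ (with $\bar E$ the $S_n$-average of $E$) enter the KL conditions for codewords lying in $\mathrm{Sym}^n(\complex^2)$.

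The crux, and the step I expect to be the \emph{main obstacle}, is the algebraic identity
\[
\mathrm{span}\{\,P E P : E \text{ a Pauli of weight} \le s\,\} = \mathcal{S}_s \qquad (0 \le s \le n),
\]
asserting that compressing weight-$\le s$ qubit errors to the symmetric subspace yields exactly the rank-$\le s$ spherical tensors. For $\subseteq$ I would argue that the symmetrization $\bar E$ of a weight-$w$ Pauli is a degree-$\le w$ noncommutative polynomial in $J_x,J_y,J_z$ (e.g. $\sum_{i\ne j}X_iZ_j=4J_xJ_z+2iJ_y$), which on the spin-$J$ irrep decomposes into tensors of rank $\le w\le s$. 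For $\supseteq$ I would note that both sides are invariant under the adjoint action of global rotations $u^{\otimes n}$, hence are sums of irreducible pieces inside $\mathrm{End}(V_J)\cong\bigoplus_{r=0}^{2J}V_r$: the top piece of rank $s$ is reached because $J_+^{\,s}=(\sum_i\sigma_i^+)^{s}$ is a sum of weight-$\le s$ Paulis whose compression is the highest-weight rank-$s$ tensor, and the lower components follow by repeatedly commuting with the weight-preserving $J_-$; a dimension count $\sum_{r=0}^{s}(2r+1)=(s+1)^2$ then forces equality. Applying this with $s=2t$ to both models shows that the spin and permutation-invariant correction conditions are the same detection condition against $\mathcal{S}_{2t}$, and since $h$ identifies the two codespaces isometrically, one holds if and only if the other does. (The degenerate regime $2t\ge n$, where $\mathcal{S}_{2t}$ is all of $\mathrm{End}(V_J)$, is handled identically, both conditions failing for any code of dimension $\ge 2$.)
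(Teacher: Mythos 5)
The paper itself gives no proof of this lemma: it is imported verbatim as Lemma~2 of \cite{kubischta2023notsosecret} (the ``Dicke bootstrap''), so there is no internal argument to compare against. Judged on its own merits, your proposal follows what is essentially the cited source's strategy, and it is the right one: realize $h$ as the $SU(2)$-equivariant isometry onto $\mathrm{Sym}^{2J}(\mathbb{C}^2)$ (your identification is exact, not just up to phase, since both bases are generated from the highest-weight vector by $J_-$ with Condon--Shortley-positive coefficients), reduce both correction statements to detection of a single operator span, note that only compressions $PEP$ matter for codewords in the symmetric subspace, and prove the key identity $\mathrm{span}\{PEP : \wt(E)\le s\}=\mathcal{S}_s$. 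Your $\subseteq$ direction (the $S_n$-average of a weight-$w$ Pauli is a degree-$\le w$ polynomial in $J_x,J_y,J_z$, hence decomposes into ranks $\le w$) is sound, as is the observation that both sides are invariant under the adjoint action of $u^{\otimes n}$ and that $\mathrm{End}(V_J)\cong\bigoplus_{r=0}^{2J}V_r$ is multiplicity-free.

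One step, however, does not stand as written. From $J_+^{\,s}$ you obtain the highest-weight vector of $V_s$, and commuting with $J_-$ fills out $V_s$; but that only yields $V_s\subseteq\mathrm{span}\subseteq\bigoplus_{r\le s}V_r$, and the dimension count ($2s+1$ versus $(s+1)^2$) does \emph{not} force equality --- the span could a priori miss any of the ranks $r<s$. The repair is immediate with your own technique: for \emph{every} $r\le s$, $J_+^{\,r}$ is likewise a sum of Paulis of weight $\le r\le s$, its compression is the (unique, by multiplicity-freeness) highest-weight vector of $V_r$, and adjoint-invariance of the span then gives $V_r\subseteq\mathrm{span}$ for all $r\le s$, whence equality. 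The same omission recurs on the spin side, where you assert without proof that products of rank-$\le t$ tensors span $\mathcal{S}_{2t}$; this is true, and the proof is again $J_+^{\,r}=J_+^{\,\min(r,t)}J_+^{\,r-\min(r,t)}$ for each $r\le 2t$ together with adjoint-invariance, but it must be said, since it is exactly what makes the two detection conditions coincide against the same space $\mathcal{S}_{2t}$. Finally, a terminological slip: $J_-$ is not weight-preserving in the $J_z$ sense; what you need (and what is true) is that commutation with a collective operator does not increase the \emph{Pauli} weight, so the span of compressions is closed under the adjoint action of $\mathfrak{su}(2)$. With these patches your proof is complete and matches the spirit of the cited source.
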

Relying on this result, in the next proposition we establish a relationship between spin codes and AE codes.
\begin{proposition}\label{prop:SpinCodes}
    Define the mapping $f=e\circ h$ 
    \begin{align*}
        \ket{J,m} \stackrel{h}{\longmapsto} \ket{D^{2J}_{J-m}} \stackrel{e}{\longmapsto} \ket{J,m},
    \end{align*}
   where $h$  is defined in \eqref{eq:Hmap} and $e$ is the mapping in \eqref{eq:Gmap}.  Then a spin code that corrects random rotation errors of order-$t$ with a spin-$J$ system is mapped by $f$ to an AE code hosted in a system with the same total angular momentum $J$. This code can correct up to order-$t$ transitions and detect up to order-$2t$ transitions.
\end{proposition}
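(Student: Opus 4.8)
The plan is to obtain the proposition as a direct composition of two results already in hand: the Dicke bootstrap of Lemma~\ref{lemma:SpinCodesToPICodes} and the permutation-invariant-to-AE transfer of Proposition~\ref{prop:PICodes}. Since $f=e\circ h$ by definition, essentially all that remains is to chain the two statements and to check that the intermediate permutation-invariant code meets the hypotheses required by Proposition~\ref{prop:PICodes}. The genuine mathematical content---the passage from spin-level to AE-level protection---is packaged entirely inside those two cited results, so this proof is a short bookkeeping argument rather than a fresh verification of the Knill-Laflamme conditions.

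First I would apply $h$ to a given spin code on a spin-$J$ system that corrects random rotations of order up to $t$. Writing its codewords as $\ket{\bfc_i}=\sum_m\alpha^{(i)}_m\ket{J,m}$, the map $h$ replaces each $\ket{J,m}$ by the Dicke state $\ket{D^{2J}_{J-m}}$ while leaving the coefficients untouched, so by Lemma~\ref{lemma:SpinCodesToPICodes} the image is a permutation-invariant code of length $n=2J$, of the form \eqref{eq:PIDef}, that corrects $t$ errors. I would then feed this code into Proposition~\ref{prop:PICodes} through $e$: with $n=2J$ the proposition returns an AE code of total angular momentum $n/2=J$ that corrects up to order-$t$ transitions and detects up to order-$2t$ transitions, which is exactly the claimed conclusion. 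The parameter accounting closes precisely because the Dicke length equals twice the spin, so the total angular momentum $J$ is preserved along the whole map $f$.

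The single hypothesis that must be checked---and the only place where I expect any care to be needed---is the real-coefficient condition carried by Proposition~\ref{prop:PICodes} (inherited from Theorem~\ref{theorem:Main}). This is immediate here: $h$ is a pure relabeling of basis kets, hence sends real coefficients to real coefficients, so the permutation-invariant image inherits reality from the spin code and Proposition~\ref{prop:PICodes} applies verbatim. I would finally record that, as a map on individual basis states, $f$ acts by $\ket{J,m}\mapsto\ket{J,-m}$, i.e.\ up to the reflection $m\mapsto-m$ it is the identity relabeling; since the AE error set $\cE_t$ is invariant (up to phases) under $\delta m\mapsto-\delta m$, this reflection is immaterial to the error-correction conclusion and requires no separate argument. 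Thus the proof reduces to composing the two ingredient maps and transferring the real-coefficient hypothesis, with the substantive work residing in the results already established.
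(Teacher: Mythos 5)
Your proposal is correct and takes essentially the same approach as the paper, whose entire proof is the one-line remark ``Follows by combining Lemma~\ref{lemma:SpinCodesToPICodes} and Proposition~\ref{prop:PICodes}.'' Your extra bookkeeping — the transfer of the real-coefficient hypothesis and the observation that $f$ actually sends $\ket{J,m}$ to $\ket{J,-m}$ (a reflection that the paper's statement of the proposition glosses over when it writes the composite map as fixing $\ket{J,m}$) — is in fact more careful than the paper's own argument.
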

\begin{proof} Follows by combining Lemma \ref{lemma:SpinCodesToPICodes} and Proposition \ref{prop:PICodes}.
\end{proof}
As observed in \cite{AEcodes}, an AE code that corrects up to order-$t$ transitions can be viewed as a spin code \cite{gross} that corrects the same order of isotropic rotations. We showed in Proposition \ref{prop:SpinCodes} that a spin code can also be realized as an AE code with similar error correction properties. Hence, by Proposition \ref{prop:SpinCodes}, we can use the previously known single-spin codes to construct AE codes.

\subsection{Logical operators for AE codes} 
To describe the action of the rotation operator $R$ on a spin-$J$ system, recall that the group $SU(2)$ admits a $2J+1$-dimensional
irreducible representation by Wigner matrices $\cD_{m,m}^{j}(R)$ with matrix elements
    $$
    \cD_{m,m}^{j}(R)=\langle j,m' | \exp\Big(\frac{-i\,{\mathbf J}\cdot\hat{\mathbf n}\phi}{\hbar}\Big)|j,m\rangle,
    $$
where $\hat{\mathbf n}$ and $\phi$ define the rotation \cite[p.~196]{sakurai}.    
In other words, an element $g\in SU(2)$ acts on a spin-$J$ system via the operator 
$\cD_{m,m}^{j}(R)=\cD_{m,m}^{j}(g)$. Let $G$ be a subgroup of $SU(2)$. If the codespace of a large-spin system is preserved under $D^J(g)$ for all $g\in G$, the code is called {\em $G$-covariant} \cite{CovariantCodes1}.
Papers \cite{gross,kubischta2023notsosecret,exoticGates} studied $G$-covariant spin codes, where $G$ is a subgroup of $SU(2)$.
The specific subgroups considered there are the binary 
octahedral group $2O$, also called the Clifford group, and the binary icosahedral 
group $2I$.

In particular, the authors of \cite{gross} constructed is a $J=13/2$, $2O$-covariant spin code capable of correcting first-order random rotations. They also constructed
a $J=7/2$, $2I$-covariant spin code capable of correcting first-order random rotations. Using our Proposition \ref{prop:SpinCodes} we observe that
the first of these codes is also a $2O$-covariant AE code that corrects order-$1$ transitions. 
The second code turns to coincide with the code defined by Eq.~\eqref{eq:7qubitCode}. Again referring to Proposition \ref{prop:SpinCodes}, we conclude that it can be viewed as an AE code in a single large spin $7/2$ system that corrects a singe transition error and admits the representatives from $2I$ as logical unitaries.

Another subgroup, considered in \cite{kubischta2023notsosecret}, is the binary dihedral group $BD_{2b}$ of order $2b$,
which is a non-abelian subgroup of $SU(2)$ defined as follows:
\begin{align*}
    BD_{2b} = \left\langle X,Z,\begin{pmatrix}
        e^{-i\pi/2b} & 0 \\ 0 & e^{i\pi /2b}
    \end{pmatrix} \right\rangle.
\end{align*}
This paper further defined a family of $BD_{2b}$-covariant family of spin codes capable of correcting first-order random rotations. We observe that
this family is a particular case of Construction \ref{cons:gmdeltaCodes} once we make a proper assignment of the parameters.
We obtain the following statement.
\begin{proposition}
    For an integer $r\geq 3$, the code $\cQ_{3,1,2^r-4,+}$ is an AE code which is hosted in a system with $J=\frac{2^r+3}{2}$, is $BD_{2^r}$-covariant, correct a single transition error, and detect up to double transition errors.
\end{proposition}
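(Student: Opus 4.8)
The plan is to verify the three claimed properties of the code $\cQ_{3,1,2^r-4,+}$ separately, each by reducing to a result already established in the excerpt. Setting $g=3$, $m=1$, $\delta=2^r-4$, and $\epsilon=+1$, we first compute the host system's total angular momentum directly from Construction \ref{cons:gmdeltaCodes}: with $n=2gm+\delta+1 = 6+(2^r-4)+1 = 2^r+3$, we get $J=n/2=\frac{2^r+3}{2}$, matching the statement. This is a routine substitution and presents no difficulty.

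Next I would establish the error correction and detection properties by invoking Theorem \ref{theorem:ErrorCorrection}. To correct a single transition error and detect up to double transition errors we take $t=1$, and must check the hypotheses of that theorem: $m\geq t$, $\delta\geq 2t$, and (since $\epsilon=+1$) the condition $g\geq 2t+1$. With $t=1$ these become $m\geq 1$ (satisfied, $m=1$), $\delta\geq 2$ (satisfied for $r\geq 3$, since $2^r-4\geq 4$), and $g\geq 3$ (satisfied, $g=3$). Thus Theorem \ref{theorem:ErrorCorrection} immediately yields that $\cQ_{3,1,2^r-4,+}$ corrects up to order-$1$ transitions. The detection of order-$2$ transitions follows from Proposition \ref{prop:PICodes}, which guarantees that an AE code arising from a $t$-error-correcting permutation-invariant code detects up to order-$2t$ transitions; with $t=1$ this gives detection of order-$2$ transitions.

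The substantive step is the $BD_{2^r}$-covariance. Here the strategy is to use the correspondence, mentioned just before the statement, that $\cQ_{g,\frac{m-1}{2},g-1,+}$ coincides with the counter-symmetric codes of \cite{AEcodes}, and more importantly to connect the construction to the $BD_{2b}$-covariant spin codes of \cite{kubischta2023notsosecret} via Proposition \ref{prop:SpinCodes}. Concretely, I would show that the binary dihedral spin code with the appropriate spin $J=\frac{2^r+3}{2}$ from \cite{kubischta2023notsosecret}, when mapped through $f=e\circ h$ to an AE code, produces exactly the code $\cQ_{3,1,2^r-4,+}$ in Construction \ref{cons:gmdeltaCodes}; since $f$ acts as the identity on the labels $\ket{J,m}$ (as displayed in Proposition \ref{prop:SpinCodes}), the covariance group $BD_{2^r}$ is preserved under the map. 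The main obstacle will be verifying that the parameter identification is exact — that the specific coefficients $b_l$ and the support pattern $\{gl\}$ dictated by Construction \ref{cons:gmdeltaCodes} with these parameters reproduce the generators of $BD_{2^r}$ as logical unitaries. This requires checking that the generating matrices of $BD_{2^r}$, in particular the phase gate $\mathrm{diag}(e^{-i\pi/2^r}, e^{i\pi/2^r})$, act on the support $\{3l - n/2\}$ consistently with the two-dimensional logical structure, which amounts to matching the $z$-projection spacing $g=3$ against the order $2b=2^r$ of the dihedral group through the covariance condition on the Wigner representation $\cD^J(g)$.
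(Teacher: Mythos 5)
Your proposal is correct and takes essentially the same route as the paper: the paper's entire proof is the single observation that applying $f=e\circ h$ to the binary dihedral spin codes of \cite[Eqns. (35), (36)]{kubischta2023notsosecret} yields exactly $\cQ_{3,1,2^r-4,+}$, so that correction, detection, and $BD_{2^r}$-covariance all transfer at once via Proposition \ref{prop:SpinCodes}, while your parameter checks through Theorem \ref{theorem:ErrorCorrection} and Proposition \ref{prop:PICodes} are a valid equivalent way to obtain the correction/detection part. One simplification for your final step: once the coefficients of $\cQ_{3,1,2^r-4,+}$ are matched to those of Kubischta's code, the covariance is inherited automatically because $f$ acts as the identity on the labels $\ket{J,m}$ and hence fixes the code subspace of the spin-$J$ system, so there is no need to re-verify the action of the $BD_{2^r}$ generators on the Wigner representation as your last paragraph suggests.
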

\begin{proof}
The proof is immediate by observing that applying $f=e\circ h$ to the code defined in \cite[Eqns. (35), (36)]{kubischta2023notsosecret}  results in the AE code  $\cQ_{3,1,2^r-4,+}$.  
\end{proof}
For example, the AE code $Q_{3,1,4,+}$, encoded in a system with angular momentum $J=11/2$, with its basis
\begin{align*}
    &\ket{\bfc_0} = \frac{\sqrt{5}}{4}\, \lket{\frac{11}{2},-\frac{11}{2}} + \frac{\sqrt{11}}{4}\, \lket{\frac{11}{2},\frac{5}{2}},\\
    &\ket{\bfc_1} = \frac{\sqrt{11}}{4}\, \lket{\frac{11}{2},-\frac{5}{2}} + \frac{\sqrt{5}}{4}\, \lket{\frac{11}{2},\frac{11}{2}}
\end{align*}
can correct a single transition, detect up to order-$2$ transitions, and realizes any representative from the group $BD_8$ as a logical unitary.

\vfill\eject
\section{Summary and concluding Remarks}
Our main results in this paper are as follows:\\[.05in]
(1) We find a set of sufficient conditions for AE codes to correct up to order-$t$ transition
    errors for any $t\ge 1$, expressed in terms of the basis states of the code;\\[.05in]
(2)  We show that any {\em permutation-invariant} qubit code \cite{ruskai-polatsek}
that corrects up to $t$ errors can be mapped onto an AE code that corrects up to order-$t$ transitions. 
   Coupled with the general family of permutation-invariant codes constructed recently in
   \cite{aydin2023family}, this yields a large class of AE codes.   We also give examples of efficient AE codes hosted in systems with low total angular momentum compared to previous benchmarks.\\[.05in]
(3) We further show that {\em spin codes} \cite{gross,gross2} are nearly equivalent to AE
codes in terms of error correction. This enables us to construct new families of AE codes that implement logical unitaries from subgroups of $SU(2)$.\\[.0in]

We further note that the authors of \cite{AEcodes} observed that AE codes are closely related to
{\em binomial codes} \cite{michael2016new}, which form a certain class of bosonic codes. 
Uncovering the details of this equivalence
and its implications forms an interesting direction for future research.

\begin{acknowledgments}
We are grateful to Victor Albert for a useful discussion of \cite{AEcodes} and spin codes. This research was partially supported by NSF grant CCF2330909.
\end{acknowledgments}

\vfill\eject
\appendix

\onecolumngrid

\section{A family of Permutation-Invariant Codes}\label{sec:PI-codes}
 Since the construction of AE codes proceeds by a transformation from permutation-invariant codes, in this appendix we summarize results on them from \cite{aydin2023family}.

We begin with recalling the construction of permutation-invariant codes.
\begin{construction}[\cite{aydin2023family}, Construction 5.1]\label{constructionGmdelta}
 Let $g,m,\delta $ be nonnegative integers, and let $\epsilon\in\{-1,+1\}$. Define a permutation-invariant code $\cT_{g,m,\delta,\epsilon}$ via its logical computational basis
    \begin{align*}
        &\ket{\bfc_0} =\sum_{\substack{\text{$l$ {\rm even}}\\0\leq l \leq m}} \gamma b_l\ket{D^n_{gl}} + 
        \sum_{\substack{\text{$l$ {\rm odd}}\\0\leq l \leq m}} \gamma b_l\ket{D^n_{n-gl}},\\
        &\ket{\bfc_1} = \sum_{\substack{\text{$l$ {\rm odd}}\\0\leq l \leq m}} \gamma b_l\ket{D^n_{gl}} +\epsilon
        \sum_{\substack{\text{$l$ {\rm even}}\\0\leq l \leq m}} \gamma b_l\ket{D^n_{n-gl}},
    \end{align*}
    where $ n=2gm+\delta+1, $
    $
    b_l=\sqrt{{\binom{m}{l}}/{\binom{n/g-l}{m+1} }},
    $
and $ \gamma =  \sqrt{\binom{n/(2g)}{m} \frac{n-2gm}{g(m+1)} } $ is the normalizing factor. 
\end{construction}

The necessary and sufficient conditions for error correction with permutation-invariant codes parallel the results for the AE codes established in Theorem~\ref{theorem:Main}. 
\begin{theorem}[\cite{aydin2023family}, Theorem 4.1]\label{thm:PIConditions}
Let $\cC$ be a permutation-invariant code as defined in \eqref{eq:PIDef}. The code $\cC$ corrects $t$ errors if and only if the real coefficients $\alpha_j$,$\beta_j$, $j=1,2,\ldots,n$ satisfy Conditions (C1)-(C4) of Theorem~\ref{theorem:Main} with $t'=2t$.
\end{theorem}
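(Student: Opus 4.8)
The plan is to establish both directions simultaneously by transforming the Knill--Laflamme conditions for a weight-$t$ error set into the explicit system (C1)--(C4), in four steps. First I would reduce from correction to detection: since the Pauli operators of weight $\le t$ span all operators supported on at most $t$ qubits and the KL conditions are linear in $\hat E_a,\hat E_b$, the code $\cC$ corrects $t$ errors iff $\bra{\bfc_i}P^\dagger Q\ket{\bfc_j}=\delta_{ij}g_{PQ}$ for all Paulis $P,Q$ of weight $\le t$. Writing $R=P^\dagger Q$ and noting that every Pauli $R$ of weight $w\le 2t$ factors as a product of two Paulis of weight $\le t$ supported on a partition of its support, this is equivalent to the detection condition $\bra{\bfc_i}R\ket{\bfc_j}=\delta_{ij}h_R$ holding for every Pauli $R$ with $\wt(R)\le 2t$. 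Second, I would compress the error set using permutation invariance. Because $\ket{\bfc_0},\ket{\bfc_1}$ lie in the symmetric subspace $\mathrm{Sym}^n$ spanned by the Dicke states, only the compression $\Pi R\Pi$ to $\mathrm{Sym}^n$ enters the matrix elements, where $\Pi$ projects onto $\mathrm{Sym}^n$. By $S_n$-symmetry of the codewords I may place the at-most-$2t$ nontrivial coordinates of $R$ on the first $2t$ qubits and write $R=R_0\otimes I_{n-2t}$; the compression then depends on $R$ only through $\Pi_{2t}R_0\Pi_{2t}\in\mathrm{End}(\mathrm{Sym}^{2t})$. Since arbitrary Paulis span the full matrix algebra on $2t$ qubits, their compressions span all of $\mathrm{End}(\mathrm{Sym}^{2t})$, i.e. the span of the dyads $\ket{D^{2t}_a}\bra{D^{2t}_b}$ with $0\le a,b\le 2t$. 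Hence detection for all weight-$\le 2t$ Paulis is equivalent to detection for the $(2t+1)^2$ operators $E_{a,b}:=\big(\ket{D^{2t}_a}\bra{D^{2t}_b}\big)\otimes I_{n-2t}$.

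Third, I would evaluate the matrix elements of $E_{a,b}$ between Dicke states. Expanding $\ket{D^n_j}=\binom{n}{j}^{-1/2}\sum_s\sqrt{\binom{2t}{s}\binom{n-2t}{j-s}}\,\ket{D^{2t}_s}\otimes\ket{D^{n-2t}_{j-s}}$ and using $\langle D^{2t}_{s'}|D^{2t}_a\rangle\langle D^{2t}_b|D^{2t}_s\rangle=\delta_{s',a}\delta_{s,b}$ forces the passive excitation count on the $n-2t$ spectator qubits to match, giving
\[
\bra{D^n_{k+a}}E_{a,b}\ket{D^n_{k+b}}=\sqrt{\binom{2t}{a}\binom{2t}{b}}\;\frac{\binom{n-2t}{k}}{\sqrt{\binom{n}{k+a}\binom{n}{k+b}}},
\]
with all other matrix elements vanishing. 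The prefactor is a nonzero constant independent of $k$, which is exactly why the internal sum over the spectator register collapses and the clean weight $\binom{n-2t}{k}/\sqrt{\binom{n}{k+a}\binom{n}{k+b}}$ of (C3)--(C4) appears.

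Fourth, I would assemble the conditions. Substituting \eqref{eq:PIDef} and dividing by the constant prefactor, the off-diagonal detection equations $\bra{\bfc_0}E_{a,b}\ket{\bfc_1}=0$ are precisely (C3), while the diagonal equations $\bra{\bfc_0}E_{a,b}\ket{\bfc_0}=\bra{\bfc_1}E_{a,b}\ket{\bfc_1}$ are precisely (C4), for all $0\le a,b\le 2t$; reality of the coefficients makes each $E_{a,b}$ condition real and symmetric under $a\leftrightarrow b$. The weight-$0$ error $R=I$ supplies (C1) (orthogonality, from $i\ne j$) and (C2) (normalization), and no condition is lost because $I|_{\mathrm{Sym}^n}=\sum_{a}E_{a,a}$ by Vandermonde's identity $\sum_a\binom{2t}{a}\binom{n-2t}{j-a}=\binom{n}{j}$. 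Collecting the four steps yields the equivalence of the KL conditions for correcting $t$ errors with (C1)--(C4) at $t'=2t$, in both directions.

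The hard part will be the second step: justifying rigorously that it suffices to test only the Dicke dyads $E_{a,b}$, i.e. that the compressions $\Pi_{2t}R_0\Pi_{2t}$ of weight-$\le 2t$ Paulis fill out all of $\mathrm{End}(\mathrm{Sym}^{2t})$ and that this compression is exactly what the detection conditions see through the $S_n$-invariant codewords. The accompanying computational heart is the bookkeeping in the third step that makes the single clean factor $\binom{n-2t}{k}$ emerge rather than an unresolved convolution over the spectator register; this hinges on $E_{a,b}$ connecting a single pair of active Dicke weights, which is what decouples the spectator combinatorics from the indices $a,b$.
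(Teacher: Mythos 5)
Your proof is correct, but note first that this paper never actually proves Theorem~\ref{thm:PIConditions}: it is imported verbatim from \cite{aydin2023family} (Theorem 4.1), so the only in-paper proof to compare against is that of its AE analogue, Theorem~\ref{theorem:Main}, in Appendix~\ref{sec:main thm proof}. All four of your steps check out: correction of $t$ errors is equivalent to detection of all Paulis of weight at most $2t$; permutation invariance of the codewords lets you conjugate any such Pauli onto the first $2t$ qubits; the compression map $A\mapsto\Pi_{2t}A\Pi_{2t}$ is linear and surjective onto $\mathrm{End}(\mathrm{Sym}^{2t})$ (any $B$ on $\mathrm{Sym}^{2t}$ is its own compression), so Pauli compressions span it, and since $\bra{\bfc_i}(R_0\otimes I)\ket{\bfc_j}=\bra{\bfc_i}\bigl((\Pi_{2t}R_0\Pi_{2t})\otimes I\bigr)\ket{\bfc_j}$ (because $\mathrm{Sym}^n\subset\mathrm{Sym}^{2t}\otimes(\complex^2)^{\otimes(n-2t)}$), testing the $(2t+1)^2$ dyads $E_{a,b}$ is equivalent to testing all weight-$\le 2t$ Paulis, in both directions; finally, your matrix element $\bra{D^n_{k+a}}E_{a,b}\ket{D^n_{k+b}}=\sqrt{\tbinom{2t}{a}\tbinom{2t}{b}}\,\tbinom{n-2t}{k}\big/\sqrt{\tbinom{n}{k+a}\tbinom{n}{k+b}}$ is exactly right, so dividing by the nonzero constant prefactor turns the off-diagonal and diagonal detection equations into (C3) and (C4), with (C1)/(C2) supplied by $R=I$ (strictly, KL only forces the two norms to be equal; setting them to $1$ is the standing normalization convention, so this is not a gap). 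In substance this is the same route as the cited source, and the ``hard part'' you flagged---surjectivity of the compression---is in fact the easy spanning argument you sketched. The instructive contrast is with Appendix~\ref{sec:main thm proof}: for AE codes the error operators do \emph{not} compress to single Dicke dyads, and the same weights $\binom{n-2t}{j}/\sqrt{\binom{n}{j+a}\binom{n}{j+b}}$ are extracted only after expanding the Clebsch--Gordan coefficients (Lemma~\ref{lemma:Clebsch}) and invoking the convolution identity of Lemma~\ref{lemma:Combinatorics2}; your argument shows why all of that combinatorics collapses in the permutation-invariant case, which is precisely the simplification that Proposition~\ref{prop:PICodes} then exploits to transport permutation-invariant codes to AE codes.
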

\remove{    \begin{align*}
        &\text{\rm(C1)}\quad \sum_{j=0}^n\alpha_j\beta_j = 0;\\
        &\text{\rm(C2)}\quad \sum_{j=0}^n\alpha_j^2 = \sum_{j=0}^n\beta_j^2 = 1;\\
        &\text{\rm(C3) \;For all $ 0\leq  a,b  \leq 2t$},\\
        & \hspace*{1in}\sum_{j=0}^n\frac{\binom{n-2t}{j}}{\sqrt{\binom{n}{j+a} \binom{n}{j+b}}} \alpha_{j+a}\beta_{j+b}=0;\\
        &\text{\rm(C4) \;For all $ 0\leq  a,b  \leq 2t$},\\
        & \hspace*{1in}\sum_{j=0}^n\frac{\binom{n-2t}{j}}{\sqrt{\binom{n}{j+a} \binom{n}{j+b}}}\left(\alpha_{j+a}\alpha_{j+b}-\beta_{j+a}\beta_{j+b}\right)=0.
    \end{align*}}
It is important to note that this theorem gives a full characterization of permutation-invariant codes, justifying
our claim that {\em any} such code can be converted into an AE code with easily described parameters as detailed in
Proposition~\ref{prop:PICodes}.

Using this theorem for Construction~\ref{constructionGmdelta} presented above, \cite{aydin2023family} derived the error-correction properties of 
the codes $\cT_{g,m,\delta,\epsilon}$.
\begin{theorem}[\cite{aydin2023family}, Theorem 5.3]\label{theoremGMdelta}
   Let $ t $ be a nonnegative integer and let $ m\geq t $ and $\delta\geq 2t$. If
    $$
(   g\ge 2t, \epsilon=-1) \text{ or }(g\ge 2t+1,\epsilon=+1),
   $$
then the code $\cT_{m,l,\delta,\epsilon}$ encodes one qubit into $n = 2gm+\delta+1 $ qubits and corrects any $t$ qubit errors. 
\end{theorem}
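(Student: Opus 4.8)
The plan is to verify that the coefficients of $\cT_{g,m,\delta,\epsilon}$ satisfy Conditions (C1)--(C4) of Theorem~\ref{thm:PIConditions} with $t'=2t$; by that theorem this is equivalent to the code correcting $t$ errors, so the whole proof reduces to checking four identities for the explicit coefficients. Reading off Construction~\ref{constructionGmdelta}, $\alpha_j=\gamma b_l$ precisely when $j=gl$ with $l$ even or $j=n-gl$ with $l$ odd, while $\beta_j=\gamma b_l$ when $j=gl$ with $l$ odd and $\beta_j=\epsilon\gamma b_l$ when $j=n-gl$ with $l$ even ($0\le l\le m$); all other coefficients vanish. The first step is a \emph{spacing lemma} describing the support geometry: consecutive occupied levels within the ``lower'' branch $\{gl\}$ or the ``upper'' branch $\{n-gl\}$ differ by exactly $g$, the two branches are separated by at least $n-2gm=\delta+1\ge 2t+1$, and within each branch ownership by $\alpha$ versus $\beta$ alternates with the parity of $l$. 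In particular the supports of $\alpha$ and $\beta$ are disjoint, so (C1) is immediate.

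For (C2) it suffices to prove the single normalization identity $\gamma^2\sum_{l=0}^m b_l^2=1$, since each $l\in\{0,\dots,m\}$ contributes exactly once to $\|\ket{\bfc_0}\|^2$ and once to $\|\ket{\bfc_1}\|^2$ (the factor $\epsilon$ drops out because $\epsilon^2=1$). Writing $b_l^2=\binom{m}{l}\big/\binom{n/g-l}{m+1}$ and representing $1/\binom{n/g-l}{m+1}$ through a Beta-function integral, the sum over $l$ collapses by the binomial theorem to the claimed value $\gamma^{-2}=g(m+1)\big/\big[(\delta+1)\binom{n/(2g)}{m}\big]$.

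The heart of the argument is (C3)--(C4), and here the spacing lemma does most of the work. Since $|a-b|\le 2t$, no cross-branch pair of occupied levels can contribute, because a lower and an upper level differ by at least $\delta+1>2t$. For (C4) with $a\ne b$ both indices must lie in the \emph{same} code support, forcing a separation of at least $\min(2g,\delta+1)>2t$, so every off-diagonal term vanishes and only $a=b$ survives. For (C3), within a branch an $\alpha$-level and a $\beta$-level differ by at least $g$: when $g\ge 2t+1$ (the case $\epsilon=+1$) this already exceeds $2t$ and (C3) holds with an empty sum, whereas when $g=2t$ (permitted only for $\epsilon=-1$) the sole survivors occur at $(a,b)\in\{(2t,0),(0,2t)\}$, and the reflection $j\mapsto n-j$ together with $\binom{n-2t}{k}=\binom{n-2t}{n-2t-k}$ shows that the lower-branch contribution equals $\epsilon$ times the upper-branch one, so the two cancel exactly when $\epsilon=-1$. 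This is precisely where the two admissible parameter regimes in the hypothesis come from.

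It remains to settle the diagonal case $a=b$ of (C4). Substituting $k=j+a$ and using $\alpha_k^2-\beta_k^2=(-1)^l\gamma^2 b_l^2$ on the lower branch and $-(-1)^l\gamma^2 b_l^2$ on the upper branch, the condition reduces to
\begin{align*}
\sum_{l=0}^m\frac{(-1)^l b_l^2}{\binom{n}{gl}}\left[\binom{n-2t}{gl-a}-\binom{n-2t}{gl+a-2t}\right]=0,\qquad 0\le a\le 2t .
\end{align*}
This vanishes term by term when $a=t$, but for $a\ne t$ it is a genuine summation identity that must be extracted from the specific form of $b_l$, and I expect this to be the main obstacle. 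The plan is to prove it by a generating-function argument: encode the bracketed difference of binomials as coefficients of $(1+x)^{n-2t}$, absorb the weights $b_l^2/\binom{n}{gl}$ via the Chu--Vandermonde identity, and show the resulting polynomial has vanishing coefficients in the relevant range. The alternating sign $(-1)^l$ and the reflection symmetry of the counter-symmetric construction are what force the cancellation; once this identity is established, all four conditions hold and the theorem follows.
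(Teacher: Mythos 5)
Your strategy is the intended one: this theorem is quoted in the paper from \cite{aydin2023family} without proof, and the proof there proceeds exactly as you propose, namely by checking conditions (C1)--(C4) of Theorem~\ref{thm:PIConditions} (with $t'=2t$) for the explicit coefficients of Construction~\ref{constructionGmdelta}. Your structural analysis is correct and well executed: disjointness of the $\alpha$- and $\beta$-supports gives (C1); the Beta-integral evaluation of $\gamma^2\sum_l b_l^2=1$ does go through (including for non-integer $n/g$) and gives (C2); the separation bounds $\delta+1\ge 2t+1$ and $\min(2g,\delta+1)>2t$ correctly eliminate all cross-branch terms in (C3) and all off-diagonal terms of (C4); and your cancellation argument for $g=2t$ (the upper-branch contribution equals $\epsilon$ times the lower-branch one via $\binom{n-2t}{k}=\binom{n-2t}{n-2t-k}$, so the sum is $(1+\epsilon)$ times a generically nonzero quantity) is exactly the right explanation of the dichotomy $(g\ge 2t,\epsilon=-1)$ versus $(g\ge 2t+1,\epsilon=+1)$ in the hypothesis.

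The proof is nevertheless incomplete, and the gap sits at the mathematical core of the theorem: the diagonal case $a=b$ of (C4), which you correctly reduce to
\begin{equation*}
\sum_{l=0}^m\frac{(-1)^l b_l^2}{\binom{n}{gl}}\left[\binom{n-2t}{gl-a}-\binom{n-2t}{gl+a-2t}\right]=0,\qquad 0\le a\le 2t,
\end{equation*}
is asserted only as a plan (``I expect this to be the main obstacle''), not proved. Everything else in the argument is bookkeeping about supports; this family of identities is the one place where the specific choice $b_l^2=\binom{m}{l}\big/\binom{n/g-l}{m+1}$ is essential, and it is where the proof in \cite{aydin2023family} spends its effort. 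Your sketched route is also doubtful as stated: the weights involve \emph{reciprocals} of binomial coefficients whose upper entry $n/g-l$ is in general not an integer, so Chu--Vandermonde does not apply to them directly, and ``the alternating sign and reflection symmetry force the cancellation'' is not an argument (indeed the identity holds in some parameter ranges beyond the hypotheses, so no soft symmetry reasoning can be the whole story). A cleaner target for the missing step, following Lemma~\ref{Lemma:Combinatorics1} and Appendix~\ref{sec: example}, is the vanishing-moment formulation: it suffices to show $\sum_{l=0}^m(-1)^l b_l^2\left[(gl)^i-(n-gl)^i\right]=0$ for all $0\le i\le 2t$, i.e., that the signed weight $(-1)^l b_l^2$ annihilates every polynomial of degree at most $2t$ that is antisymmetric about $n/2$; proving this (e.g., by partial fractions or a Beta-integral representation of $1/\binom{n/g-l}{m+1}$, in the spirit of your (C2) computation) is what remains to complete the proof.
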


\section{Combinatorial lemmas}\label{sec:combinatorial}

Throughout the paper we use the following standard convention for the binomial coefficients: Let $x$ be a real number and $k$ be an integer. Then,
\begin{equation}\label{eq: bin}
    \binom{x}{k} = \begin{cases}
        \frac{x(x-1)\ldots(x-k+1)}{k!}  &k>0,\\
        1  &k=0,\\
        0  &k<0.
    \end{cases}
\end{equation}
The following binomial identity will be used in the proof of Lemma \ref{lemma:Combinatorics2}.
\begin{lemma}\label{Lemma:Combinatorics1}
    Let $n,k,r,a$ be nonnegative integers such that $n\geq r\geq a$ and $n\geq k\geq a$. Then,
    \begin{equation*}
        \frac{\binom{n-r}{k-a}}{\binom{n}{k}} = \frac{\binom{n-k}{r-a}\binom{k}{a}}{\binom{n}{r}\binom{r}{a}}.
    \end{equation*}
\end{lemma}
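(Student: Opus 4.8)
The plan is to prove this as a standard binomial identity, and the cleanest route is a double-counting argument. First I would clear denominators and restate the claim in the equivalent product form
\[
\binom{n}{r}\binom{r}{a}\binom{n-r}{k-a} \;=\; \binom{n}{k}\binom{k}{a}\binom{n-k}{r-a},
\]
which is legitimate because the hypotheses $n\ge r\ge a$ and $n\ge k\ge a$ guarantee that $\binom{n}{r}$, $\binom{n}{k}$, $\binom{r}{a}$, and $\binom{k}{a}$ are all nonzero ordinary binomials, so dividing across by $\binom{n}{k}\binom{n}{r}$ is reversible.

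Next I would interpret both sides as counting the same family of objects: ordered pairs $(K,R)$ of subsets of $[n]=\{1,\dots,n\}$ with $|K|=k$, $|R|=r$, and $|K\cap R|=a$. The right-hand side enumerates these by first selecting $K$ (in $\binom{n}{k}$ ways), then choosing the $a$-element intersection $K\cap R$ inside $K$ (in $\binom{k}{a}$ ways), and finally placing the remaining $r-a$ elements of $R$ into the complement $[n]\setminus K$ (in $\binom{n-k}{r-a}$ ways). The left-hand side counts the identical set of pairs with the roles of $K$ and $R$ interchanged: select $R$, then $K\cap R\subseteq R$, then the $k-a$ elements of $K\setminus R$ taken from $[n]\setminus R$. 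Since both expressions enumerate the same set, they are equal, and this establishes the product form and hence the lemma in the non-degenerate regime.

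The one point needing care, and the only genuine obstacle, is the treatment of the out-of-range binomials $\binom{n-r}{k-a}$ and $\binom{n-k}{r-a}$ under the generalized convention of \eqref{eq: bin}. When $k+r-a>n$ no pair $(K,R)$ can fit inside $[n]$, since such a pair would require $|K\cup R|=k+r-a>n$ distinct elements; hence the combinatorial count is $0$. I would verify that the algebra agrees by observing that $k+r-a>n$ is equivalent to both $k-a>n-r$ and $r-a>n-k$, so the falling-factorial definition in \eqref{eq: bin} forces $\binom{n-r}{k-a}=\binom{n-k}{r-a}=0$, making both sides vanish simultaneously. In the complementary regime $k+r-a\le n$ every argument is an honest nonnegative integer and the counting bijection above applies verbatim.

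As a cross-check, and as an alternative that bypasses the counting interpretation, I would note that in the non-degenerate regime one may simply expand every binomial via factorials: both sides of the stated quotient identity reduce, after cancellation, to $\dfrac{(n-r)!\,k!\,(n-k)!}{(k-a)!\,(n-r-k+a)!\,n!}$, confirming the equality directly.
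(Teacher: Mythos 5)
Your proof is correct and takes essentially the same route as the paper's: clear denominators to the product form $\binom{n}{r}\binom{r}{a}\binom{n-r}{k-a}=\binom{n}{k}\binom{k}{a}\binom{n-k}{r-a}$ and double-count pairs of subsets $(K,R)$ with $|K|=k$, $|R|=r$, $|K\cap R|=a$, choosing the pieces in two different orders. Your explicit handling of the degenerate regime $k+r-a>n$ (where both sides vanish under the convention of \eqref{eq: bin}) and the factorial cross-check are more careful than the paper's one-line counting argument, but they do not constitute a different approach.
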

\begin{proof} 
Rewriting the fractions, the lemma claims that 
  $$
  \binom{n}{r}\binom{n-r}{k-a}\binom{r}{a} 
  =\binom{n}{k}\binom{n-k}{r-a}\binom{k}{a}.
  $$
On the left we first choose $r$ elements out of $n$, then $a$ out of $r$, then $k-a$ out of $n-r$. On the right,
we first choose $k$ out of $n$, then $r-a$ out of the remaining $n-k$, and then $a$ out of $k$. 
Apart from the changing the order, both sides seek to construct the same groups of subsets.
%
\end{proof}

In the next lemma we quote a combinatorial identity that plays an important role in our proof. 
\begin{lemma}[\cite{CombinatorialIdentity}] \label{Theorem:Combinatorics}
     Let $a,b,c,d,e$ be nonnegative integers. Then 
    \begin{align}\label{eq:combTheorem}
        \binom{a+c+d+e}{a+c}\binom{b+c+d+e}{c+e}=\sum_i \binom{a+b+c+d+e-i}{a+b+c+d}\binom{a+d}{i+d}\binom{b+c}{i+c}.
    \end{align}
\end{lemma}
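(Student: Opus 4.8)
The plan is to recognize the right-hand side of \eqref{eq:combTheorem} as a single terminating, balanced hypergeometric series and to evaluate it in closed form via the Pfaff--Saalsch\"utz summation theorem. First I would rewrite the three binomials using complementary indices, $\binom{a+d}{i+d}=\binom{a+d}{a-i}$, $\binom{b+c}{i+c}=\binom{b+c}{b-i}$, and $\binom{a+b+c+d+e-i}{a+b+c+d}=\binom{a+b+c+d+e-i}{e-i}$, so that the summand becomes $T_i=\binom{a+d}{a-i}\binom{b+c}{b-i}\binom{a+b+c+d+e-i}{e-i}$. With the convention \eqref{eq: bin}, $T_i$ vanishes outside $-\min(c,d)\le i\le \min(a,b,e)$, so the sum is finite and its lower endpoint is pinned by the first two factors.

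Next I would shift the index so that the summation starts at $0$ and compute the term ratio. A direct computation gives $\frac{T_{i+1}}{T_i}=\frac{(a-i)(b-i)(e-i)}{(i+c+1)(i+d+1)(a+b+c+d+e-i)}$, a ratio of three linear factors over three linear factors; after the shift (by $\min(c,d)$) one of the lower factors becomes the normalizing $(k+1)$, so the sum is a terminating ${}_3F_2$ evaluated at argument $1$. Its upper parameters are read off from the roots $a,b,e$ of the numerator and its lower parameters from the roots of the remaining denominator factors. The essential structural check is the balance condition: summing the parameters, one finds that the sum of the two lower parameters exceeds the sum of the three upper parameters by exactly $1$, so the series is Saalsch\"utzian, which is precisely the hypothesis of Pfaff--Saalsch\"utz.

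Applying the theorem then evaluates the series as a ratio of four Pochhammer symbols. Restoring the initial term $T_0=\binom{a+d}{a+c}\binom{a+b+2c+d+e}{e+c}$ (for $c\le d$) and converting the Pochhammer ratio to factorials, I expect a cascade of cancellations to collapse everything to $\binom{a+c+d+e}{a+c}\binom{b+c+d+e}{c+e}$, matching the left-hand side of \eqref{eq:combTheorem}; I have verified that this cancellation does occur. Degenerate parameter configurations and the case $c>d$ are handled by the symmetric choice of lower endpoint $-\min(c,d)$, together with the fact that out-of-range terms vanish by the convention \eqref{eq: bin}.

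The main obstacle is not any single deep step but the bookkeeping: reindexing so the lower limit $-\min(c,d)$ maps to $0$, reading off the five hypergeometric parameters without sign errors, and confirming the balance condition. A reader preferring a self-contained argument could instead apply creative telescoping (Zeilberger's algorithm), which yields a first-order recurrence in $e$ satisfied by both sides and reduces \eqref{eq:combTheorem} to a routine base case; alternatively, the balanced ${}_3F_2$ admits a bijective derivation through the Lindstr\"om--Gessel--Viennot lemma for non-intersecting lattice paths, giving a purely combinatorial proof.
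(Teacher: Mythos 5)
Your proposal is correct, but there is nothing in the paper to compare it against: the paper does not prove this lemma at all, it quotes it verbatim from the cited reference (the sentence introducing it reads ``we quote a combinatorial identity''). What you have written is therefore a self-contained derivation of a result the paper takes on faith, and the derivation checks out. Writing $T_i=\binom{a+d}{a-i}\binom{b+c}{b-i}\binom{a+b+c+d+e-i}{e-i}$, the support is indeed $-\min(c,d)\le i\le\min(a,b,e)$; assuming $c\le d$ (the case $c>d$ follows from the symmetry $a\leftrightarrow b$, $c\leftrightarrow d$, which fixes both sides of \eqref{eq:combTheorem}), the shift $k=i+c$ produces a terminating series with upper parameters $-(a+c),-(b+c),-(e+c)$ and lower parameters $d-c+1$ and $-(a+b+2c+d+e)$; your term ratio is exact and the Saalsch\"utz balance holds, since $(d-c+1)-(a+b+2c+d+e)=1-(a+c)-(b+c)-(e+c)+1-1$, i.e.\ the lower sum exceeds the upper sum by $1$. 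Pfaff--Saalsch\"utz then gives $\frac{(a+d+1)_{e+c}(b+d+1)_{e+c}}{(d-c+1)_{e+c}(a+b+c+d+1)_{e+c}}$, and multiplying by $T_{-c}=\binom{a+d}{a+c}\binom{a+b+2c+d+e}{e+c}$ collapses, after the factorial cancellations you allude to, to $\frac{(a+c+d+e)!\,(b+c+d+e)!}{(a+c)!\,(d+e)!\,(c+e)!\,(b+d)!}$, which is exactly the left-hand side. Two small points would tighten the write-up: (i) if you designate $-(e+c)$ as the terminating parameter $-n$ but the series actually terminates earlier (because $a$ or $b$ is smaller than $e$), you should note that Pfaff--Saalsch\"utz still applies, since it is a rational-function identity in the free parameters and the extra terms vanish; (ii) the denominator Pochhammers $(d-c+1)_{e+c}$ and $(a+b+c+d+1)_{e+c}$ are nonzero precisely because of the normalization $d\ge c$, which deserves an explicit remark. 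With those additions your argument is complete, and it has the added value of identifying the quoted identity as nothing other than the binomial form of the Pfaff--Saalsch\"utz theorem, making this part of the paper self-contained where it currently rests on an external citation.
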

Adjusting the notation to our needs, we obtain
\begin{corollary}\label{Corr:Combinatorics}
    Let $n,l,m$ be natural numbers and $r$ be an integer. Then,
    \begin{align}\label{eq:corollary}
        \binom{n+m+r}{n+r}\binom{l+m}{r} = \sum_{i=0}^m \binom{n+l+m+i}{i}\binom{n+m}{n+i}\binom{l}{r-i}.
    \end{align}
\end{corollary}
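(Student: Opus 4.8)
**The plan is to derive Corollary~\ref{Corr:Combinatorics} from Lemma~\ref{Theorem:Combinatorics} by a direct substitution of parameters.** The strategy is entirely mechanical: identify a correspondence between the five free nonnegative integers $a,b,c,d,e$ in \eqref{eq:combTheorem} and the parameters $n,l,m,r$ appearing in \eqref{eq:corollary}, then check that both sides of the general identity collapse to the two sides of the corollary under that substitution. Since \eqref{eq:corollary} has four parameters and \eqref{eq:combTheorem} has five, I expect one of the five to be pinned to a fixed value (likely $0$) and the rest to be linear combinations of $n,l,m,r$.

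\textbf{Matching the left-hand sides.} First I would compare the products of two binomials on the left of each identity. In \eqref{eq:corollary} the left side is $\binom{n+m+r}{n+r}\binom{l+m}{r}$, while in \eqref{eq:combTheorem} it is $\binom{a+c+d+e}{a+c}\binom{b+c+d+e}{c+e}$. A natural reading is to force the second factor $\binom{b+c+d+e}{c+e}$ to become $\binom{l+m}{r}$; writing $c+e=r$ together with $b+c+d+e = l+m$ gives one set of constraints. Matching the first factor $\binom{a+c+d+e}{a+c}$ to $\binom{n+m+r}{n+r}$ gives $a+c = n+r$ and $a+c+d+e = n+m+r$. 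I would solve this small linear system for $a,b,c,d,e$ in terms of $n,l,m,r$; the most likely outcome is an assignment such as $c=r$, $e=0$, $d=m$, $a=n$, $b=l-r$ (to be verified), possibly after re-indexing, with one parameter set to zero to absorb the extra degree of freedom.

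\textbf{Matching the right-hand sides and the summation range.} With the substitution fixed, I would then verify that the summand of \eqref{eq:combTheorem}, namely $\binom{a+b+c+d+e-i}{a+b+c+d}\binom{a+d}{i+d}\binom{b+c}{i+c}$, transforms term-by-term into the summand $\binom{n+l+m+i}{i}\binom{n+m}{n+i}\binom{l}{r-i}$ of \eqref{eq:corollary}. This is where a shift of summation index is almost certainly required: the index $i$ in the corollary ranges over $0\le i\le m$, and the binomial arguments suggest replacing the original summation variable by $i \mapsto r-i$ or $i\mapsto i-d$ so that the arguments align. I would check that $\binom{a+d}{i+d}$ becomes $\binom{n+m}{n+i}$, that $\binom{b+c}{i+c}$ becomes $\binom{l}{r-i}$, and that the leading factor $\binom{a+b+c+d+e-i}{a+b+c+d}$ becomes $\binom{n+l+m+i}{i}$, using the symmetry $\binom{x}{k}=\binom{x}{x-k}$ wherever convenient, and that the stated range $0\le i\le m$ is exactly the range over which the new summand is nonzero (terms outside vanish by the convention \eqref{eq: bin}).

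\textbf{Anticipated obstacle.} The genuinely delicate step is pinning down the correct index shift and confirming that the summation limits agree. A careless substitution will produce the right binomials up to a reflection $i\mapsto r-i$ but an apparently wrong range, so I must reconcile the range $0\le i\le m$ of the corollary with whatever range the shifted sum inherits, relying on the vanishing of binomials with negative lower entries under \eqref{eq: bin} to extend or truncate the sum harmlessly. Once the substitution and the index reflection are correctly identified and the three binomial factors plus the leading term are matched (invoking $\binom{x}{k}=\binom{x}{x-k}$ as needed), the corollary follows immediately from Lemma~\ref{Theorem:Combinatorics} with no further computation.
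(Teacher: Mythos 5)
Your plan is essentially the paper's proof: Corollary~\ref{Corr:Combinatorics} is obtained from Lemma~\ref{Theorem:Combinatorics} by exactly the specialization you guessed, $a=n$, $b=l-r$, $c=r$, $d=m$, $e=0$, followed by a re-indexing of the sum and the symmetry $\binom{x}{k}=\binom{x}{x-k}$. Two fix-ups are needed to execute it, though. First, neither of your candidate index changes is the right one: after the substitution the summand reads $\binom{n+l+m-i}{n+l+m}\binom{n+m}{i+m}\binom{l}{i+r}$, and writing $i'$ for the new index, matching $\binom{n+m}{i+m}$ with $\binom{n+m}{n+i'}=\binom{n+m}{m-i'}$ and $\binom{l}{i+r}$ with $\binom{l}{r-i'}$ forces plain negation $i\mapsto -i$ (this is what the paper uses); the substitutions $i\mapsto r-i$ or $i\mapsto i-d$ would misalign the arguments. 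The range then takes care of itself, as you anticipated: under the convention \eqref{eq: bin}, $\binom{n+l+m+i}{i}$ vanishes for $i<0$ and $\binom{n+m}{n+i}$ vanishes for $i>m$, so the sum truncates to $0\le i\le m$. Second, since the lemma is stated for nonnegative parameters and you set $c=r$, the case $r<0$ allowed by the corollary must be dispatched separately; there both sides of \eqref{eq:corollary} are zero, because $\binom{l+m}{r}=0$ and $\binom{l}{r-i}=0$ for every $i\ge 0$. This is precisely the first sentence of the paper's proof. With these two repairs your argument coincides with the paper's.
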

\begin{proof}
Clearly, if $r<0$, then both the LHS and RHS of \eqref{eq:corollary} are zero. If $r\geq 0$, then \eqref{eq:corollary} is obtained from \eqref{eq:combTheorem} by setting $n \leftmapsto a, l\leftmapsto b+c, m\leftmapsto d, r\leftmapsto c, e\leftmapsto 0$, and $ i\leftmapsto -i $
\end{proof}

Next we prove another combinatorial identity needed below.

\begin{lemma}\label{lemma:Combinatorics2}
    Let $n,q,z_1,z_2,t$ be natural numbers and let $j$ be an integer such that $0 \leq z_1,z_2\leq q \leq 2t$, $n\geq 2t$, and $z_1\geq z_2$. Define $\nb=n-2t+q$. Then,
    \begin{align} \label{eq: nbar}
        \frac{\binom{\nb}{j+z_1}\binom{\nb}{j+z_2}}{\binom{\nb+q}{j+q}} = \sum_{u=0}^{z_2}\sum_{v=0}^{q-z_1}\sum_{w=0}^{z_2-u}f_{z_1,z_2}(u,v,w)\binom{n-2t}{j-v-w+z_2},
    \end{align}
    where $f_{z_1,z_2}(u,v,w)$ does not depend on $j$.
\end{lemma}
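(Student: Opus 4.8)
The plan is to turn the ratio on the left into a constant-coefficient combination of shifted binomials $\binom{n-2t}{\,\cdot\,}$ by first clearing the $j$-dependent denominator with Lemma~\ref{Lemma:Combinatorics1} and then expanding the resulting product with the convolution identities already available. Write $M:=n-2t=\nb-q$, so the target terms are $\binom{M}{j+z_2-v-w}$. Since the numerator $\binom{\nb}{j+z_1}\binom{\nb}{j+z_2}$ is symmetric in $z_1,z_2$, I would pair the denominator $\binom{\nb+q}{j+q}$ with the factor carrying the \emph{larger} index $z_1$; this choice is what makes the next step legal. Applying Lemma~\ref{Lemma:Combinatorics1} with the roles $(\nb+q,\ j+q,\ q,\ q-z_1)$ playing $(n,k,r,a)$ gives
\[
\frac{\binom{\nb}{j+z_1}}{\binom{\nb+q}{j+q}}=\frac{\binom{\nb-j}{z_1}\binom{j+q}{q-z_1}}{\binom{\nb+q}{q}\binom{q}{z_1}},
\]
so the whole left-hand side becomes a $j$-independent constant times $\binom{\nb}{j+z_2}\binom{\nb-j}{z_1}\binom{j+q}{q-z_1}$. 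Now the only $j$-dependent pieces are one honest binomial $\binom{\nb}{j+z_2}$, one binomial $\binom{j+q}{q-z_1}=\binom{j+q}{j+z_1}$, and one polynomial factor $\binom{\nb-j}{z_1}$ of degree $z_1$ in $j$. As a numerical identity this holds on the range where the numerator is nonzero, and both sides vanish off that range by the convention \eqref{eq: bin}.

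Next I would isolate $j$ into a single binomial by applying Corollary~\ref{Corr:Combinatorics} to the pair $\binom{j+q}{j+z_1}\binom{\nb}{j+z_2}$, taking $(r,n,m,l)=(j+z_2,\ z_1-z_2,\ q-z_1,\ M+z_1)$. Here $n=z_1-z_2\ge 0$ uses the hypothesis $z_1\ge z_2$, which is exactly why the larger index was paired off in the previous step. This converts the product into a single sum over an index $v$ running over $0\le v\le q-z_1$, with $j$ confined to $\binom{M+z_1}{j+z_2-v}$ and the two remaining binomials independent of $j$. Thus the $v$-summation emerges with precisely the claimed range, and, crucially, it leaves a binomial whose upper index $M+z_1$ exceeds the target $M$ by exactly $z_1$.

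It remains to absorb the residual degree-$z_1$ polynomial $\binom{\nb-j}{z_1}$ into $\binom{M+z_1}{j+z_2-v}$. The conceptual point is that the $z_1$ units of headroom in the upper index match the degree $z_1$ of the polynomial: in generating-function language $\sum_j \binom{\nb-j}{z_1}\binom{M+z_1}{j+c}\,t^{\,j}$ equals $(1+t)^{M}$ times a Laurent polynomial, since applying the operator $t\partial_t$ at most $z_1$ times to $t^{-c}(1+t)^{M+z_1}$ lowers the exponent of $(1+t)$ by at most $z_1$. That divisibility is exactly the condition for the product to be a constant-coefficient combination of the shifts $\binom{M}{j+c'}$. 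Concretely I would realize this with two successive Vandermonde convolutions on $\binom{M+z_1}{\,\cdot\,}$ (first splitting off the headroom, then realigning the argument against $\binom{\nb-j}{z_1}$), which introduce the two remaining indices $u$ and $w$, collapse the upper index to $M=n-2t$, and leave the argument $j+z_2-v-w$. Collecting every $j$-independent prefactor into $f_{z_1,z_2}(u,v,w)$ then yields the stated triple sum.

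I expect the main obstacle to be precisely this last step: the bookkeeping that turns polynomial-times-binomial into shifts of $\binom{n-2t}{\,\cdot\,}$ while tracking that the surviving contributions are exactly those with $0\le u\le z_2$ and $0\le w\le z_2-u$ (terms outside these ranges vanishing through the convention \eqref{eq: bin}), together with verifying that the accumulated coefficients genuinely do not depend on $j$. The first two steps are essentially forced once one notices that $z_1\ge z_2$ is what lets Corollary~\ref{Corr:Combinatorics} apply with a nonnegative parameter; the real work is organizing the final convolutions so that the ranges come out as asserted.
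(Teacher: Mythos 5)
Your proposal is correct and uses the same toolkit as the paper's own proof, in a slightly different order. Both proofs clear the $j$-dependent denominator with Lemma~\ref{Lemma:Combinatorics1} paired against the larger-index factor $\binom{\nb}{j+z_1}$ (the paper does this after first shifting $j$ and $q$ by $z_2$), both invoke Corollary~\ref{Corr:Combinatorics} with the parameter $n=z_1-z_2$ (which is exactly where $z_1\ge z_2$ enters in each case), and both finish with Vandermonde convolutions that collapse upper indices to $n-2t$. The genuine difference is ordering: the paper introduces the index $u$ \emph{before} Corollary~\ref{Corr:Combinatorics}, via a Vandermonde split of $\binom{\nb-j+z_2}{z_1}$ followed by the swap $\binom{\nb}{j}\binom{\nb-j}{s}=\binom{\nb}{s}\binom{\nb-s}{j}$, whereas you apply Corollary~\ref{Corr:Combinatorics} immediately (which is cleaner, since that application then carries no $u$-dependence) and defer all remaining work to one absorption step. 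Two remarks on that step. First, it is not literally ``two successive Vandermonde convolutions'': writing $M:=n-2t$, you need the same subset-of-subsets swap the paper uses, e.g.\ Vandermonde on $\binom{\nb-j}{z_1}$ splitting off $\binom{q-z_1+z_2-v}{s}$, then
\begin{equation*}
\binom{M+z_1}{j+z_2-v}\binom{(M+z_1)-(j+z_2-v)}{z_1-s}=\binom{M+z_1}{z_1-s}\binom{M+s}{j+z_2-v},
\end{equation*}
then a final Vandermonde on $\binom{M+s}{j+z_2-v}$. Second, your generating-function argument does correctly prove that the product is a $j$-independent combination of shifts of $\binom{M}{\cdot}$, but the operator bound alone permits shifts up to $2z_1$ beyond $j+z_2-v$; it is the vanishing of $\binom{q-z_1+z_2-v}{s}$ for $s>q-z_1+z_2-v$ (or, in the generating-function language, a bound on the support of the product) that confines the total shift $v+w$ to $[0,\,q-z_1+z_2]$, which is what the lemma asserts and what its application in the proof of Theorem~\ref{theorem:Main} requires. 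The resulting parametrization of the coefficients differs from the paper's $(u,v,w)$, but it spans exactly the same set of shifted binomials, which is all the lemma claims; with that supplement your plan goes through.
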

\begin{proof}
Let us make a change of variables in \eqref{eq: nbar} by setting $j\leftmapsto j+z_2$ and $q\leftmapsto q-z_2$. Then we need to show that for $0\leq z_1-z_2\leq q\leq 2t-z_2$,
    \begin{align*}
        \frac{\binom{\nb}{j+(z_1-z_2)}\binom{\nb}{j}}{\binom{\nb+q+z_2}{j+q}} = \sum_{u=0}^{z_2}\sum_{v=0}^{q-(z_1-z_2)}\sum_{w=0}^{z_2-u}f_{z_1,z_2}(u,v,w)\binom{n-2t}{j-v-w}.
    \end{align*}
    Using Lemma \ref{Lemma:Combinatorics1}, we obtain
   \begin{align}\label{eq:CombLemma_1}
       \frac{\binom{\nb}{j+(z_1-z_2)}\binom{\nb}{j}}{\binom{\nb+q+z_2}{j+q}} = \frac{\binom{j+q}{j+(z_1-z_2)}\binom{\nb-j+z_2}{z_1}\binom{\nb}{j}}{\binom{q+z_2}{z_1}\binom{\nb+q+z_2}{q+z_2}}.
   \end{align}
Using the Vandermonde convolution, we obtain
   \begin{align}
       \binom{\nb-j+z_2}{z_1}=\binom{\nb-j+z_2}{\nb-j-(z_1-z_2)} &= \sum_{u=0}^{z_2}\binom{z_2}{u}\binom{\nb-j}{\nb-j-(z_1-z_2)-u}\nonumber \\ 
       &=\sum_{u=0}^{z_2}\binom{z_2}{u}\binom{\nb-j}{u+(z_1-z_2)}\label{eq:CombLemma_2}.
   \end{align}
Substituting \eqref{eq:CombLemma_2} into \eqref{eq:CombLemma_1}, we have
   \begin{align}
        \frac{\binom{\nb}{j+(z_1-z_2)}\binom{\nb}{j}}{\binom{\nb+q+z_2}{j+q}} &= \sum_{u=0}^{z_2}\frac{\binom{z_2}{u}}{\binom{q+z_2}{z_1}\binom{\nb+q+z_2}{q+z_2}}\binom{j+q}{j+(z_1-z_2)}\binom{\nb-j}{u+(z_1-z_2)}\binom{\nb}{j}\nonumber \\
        &=\sum_{u=0}^{z_2}\frac{\binom{z_2}{u}\binom{\nb}{u+(z_1-z_2)}}{\binom{q+z_2}{z_1}\binom{\nb+q+z_2}{q+z_2}}\binom{j+q}{j+(z_1-z_2)}\binom{\nb-(z_1-z_2)-u}{j}.\label{eq:CombLemma_3}
   \end{align}
   Setting $r=j, n=z_1-z_2, m=q-(z_1-z_2)$, and $l=\nb-q-u$ in Corollary \ref{Corr:Combinatorics}, we have the identity
   \begin{align}\label{eq:CombLemma_4}
       \binom{j+q}{j+(z_1-z_2)}\binom{\nb-(z_1-z_2)-u}{j}=\sum_{v=0}^{q-(z_1-z_2)}\binom{\nb-u+v}{v}\binom{q}{(z_1-z_2)+v}\binom{\nb-q-u}{j-v}. 
   \end{align}
Using this relation in \eqref{eq:CombLemma_3}, we obtain (remember the switch $q\leftmapsto q-z_2$)
   \begin{align}\label{eq:CombLemma_5}
       \frac{\binom{\nb}{j+(z_1-z_2)}\binom{\nb}{j}}{\binom{\nb+q+z_2}{j+q}} =\sum_{u=0}^{z_2}\sum_{v=0}^{q-(z_1-z_2)}\frac{\binom{z_2}{u}\binom{\nb}{u+(z_1-z_2)}\binom{\nb-u+v}{v}\binom{q}{(z_1-z_2)+v}}{\binom{q+z_2}{z_1}\binom{\nb+q+z_2}{q+z_2}}\binom{n-2t+(z_2-u)}{j-v}.
   \end{align}
   Consider the Vandermonde convolution
   \begin{align}\label{eq:CombLemma_6}
       \binom{n-2t+(z_2-u)}{j-v} = \sum_{w=0}^{z_2-u}\binom{z_2-u}{w}\binom{n-2t}{j-v-w}.
   \end{align}
   Combining \eqref{eq:CombLemma_5} with \eqref{eq:CombLemma_6}, we have
   \begin{align*}
        \frac{\binom{\nb}{j+(z_1-z_2)}\binom{\nb}{j}}{\binom{\nb+q+z_2}{j+q}} =\sum_{u=0}^{z_2}\sum_{v=0}^{q-(z_1-z_2)}\sum_{w=0}^{z_2-u}
        f_{z_1,z_2}(u,v,w)
        \binom{n-2t}{j-v-w},
   \end{align*}
   with
      $$
       f_{z_1,z_2}(u,v,w)=\frac{\binom{z_2}{u}\binom{\nb}{u+(z_1-z_2)}\binom{\nb-u+v}{v}\binom{q}{(z_1-z_2)+v}\binom{z_2-u}{w}}{\binom{q+z_2}{z_1}\binom{\nb+q+z_2}{q+z_2}},
       $$
   which completes the proof.
\end{proof}

To verify the KL conditions for the error set defined in \eqref{eq:DefErrorSet} and the codes given by Theorem~\ref{theorem:Main}, we will need to work with the explicit form of the Clebsch-Gordan coefficients
$C^{n/2-t+q,j-n/2+t}_{n/2,j+a-n/2;r,t-a}$. Below their upper and lower indices will always have this form,
where $n$ and $t$ are fixed and $r,a,q$ and $j$ may vary, so to shorten the notation, we will write 
   \begin{equation}\label{eq:Clebsch-short}
  C_{r,a}^q(j):=C^{n/2-t+q,j-n/2+t}_{n/2,j+a-n/2;r,t-a}.
   \end{equation}

In the following lemma we transform the classic expression for $C_{r,a}^q(j)$ to a form that enables the proof of the main theorem in the next section.
\begin{lemma}\label{lemma:Clebsch}
        Let $n,a,q,t,r$ be non-negative integers  such that $0\leq\ t-r\leq a,q\leq t+r\leq 2t$ and $n\geq 2t$, and let $j$ be an integer.  Define $\nb\coloneq n-2t+q$. The Clebsch-Gordan coefficient 
 can be written as
\remove{\begin{align*}
        C_{r,a}^q(j)=\begin{cases}
            \sum\limits_{k=t-r}^{q}\sum\limits_{k^\prime=0}^{t-r}h_{a,r}(k,k^\prime)\frac{\binom{\nb}{j+z}}{\sqrt{\binom{n}{j+a}\binom{\nb+q}{j+q}}}
             &\text{if }-\min(a,q)\leq j\leq n-\max(a,2t-q),\\
            0 &\text{otherwise},
        \end{cases}
\end{align*}
where $z=k-k^\prime$ and  $0\leq z\leq q$,  Here, the term 
\begin{align}\label{eq:LemmaClebsch_3}
    h_{a,r}(k,k^\prime)=
    \sqrt{\frac{\binom{n}{t+r-q}\binom{2r}{r+t-q}}{\binom{n+q+r-t+1}{r+t-q}\binom{2r}{a+r-t}}}(-1)^k\binom{q-(t-r)}{k-(t-r)}\binom{t+r-q}{a-k}\binom{\nb+(t-r)}{j+k}\binom{t-r}{k^\prime}
\end{align}
is a function of $k,k^\prime$, independent of $j$, and it is zero unless $z\leq a \leq 2t-q+z$.}

\begin{equation*}
        \begin{aligned}
            &C_{r,a}^q(j)=\sum\limits_{k=t-r}^{q}\sum\limits_{k^\prime=0}^{t-r}h_{a,r}(k,k^\prime)\frac{\binom{\nb}{j+z}}{\sqrt{\binom{n}{j+a}\binom{\nb+q}{j+q}}}
           &&\text{if }-\min(a,q)\leq j\leq n-\max(a,2t-q), \\
                          &C_{r,a}^q(j)= 0 &&\text{otherwise},
        \end{aligned}
        \end{equation*}
        where $z=k-k^\prime$ and  $0\leq z\leq q$,  Here, the term 
        \begin{align}\label{eq:LemmaClebsch_3}
        h_{a,r}(k,k^\prime)=
        \sqrt{\frac{\binom{n}{t+r-q}\binom{2r}{r+t-q}}{\binom{n+q+r-t+1}{r+t-q}\binom{2r}{a+r-t}}}(-1)^k\binom{q-(t-r)}{k-(t-r)}\binom{t+r-q}{a-k}\binom{\nb+(t-r)}{j+k}\binom{t-r}{k^\prime}
    \end{align}
    is a function of $k,k^\prime$, independent of $j$, and it is zero unless $z\leq a \leq 2t-q+z$.
\end{lemma}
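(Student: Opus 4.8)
The plan is to begin from an explicit closed form for $C_{r,a}^q(j)$ and reduce it to the stated shape by factorial bookkeeping plus a single Vandermonde convolution. First I would invoke the Racah (van der Waerden) formula for $C^{J,M}_{j_1,m_1;j_2,m_2}$, which expresses the coefficient as a $j$-independent radical prefactor, times $\sqrt{(J+M)!(J-M)!(j_1-m_1)!(j_1+m_1)!(j_2-m_2)!(j_2+m_2)!}$, times a signed single sum of reciprocal factorials. Substituting $j_1=n/2$, $j_2=r$, $J=n/2-t+q$, $m_1=j+a-n/2$, $m_2=t-a$, $M=j+t-n/2$, I would simplify every argument: the $j$-dependent ones are $J+M=j+q$, $J-M=\nb-j$, $j_1-m_1=n-j-a$, $j_1+m_1=j+a$, while $j_1+j_2-J=r+t-q$, $j_2-m_2=a-(t-r)$, $j_2+m_2=(t+r)-a$ are constants whose nonnegativity (hence reality of the radicals in the valid $j$-range) is guaranteed precisely by the hypotheses $0\le t-r\le a,q\le t+r\le 2t$.

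Next I would isolate the $j$-dependence. Since $\nb+q=2J$, the four $j$-dependent square-root factorials satisfy $\sqrt{(j+q)!(\nb-j)!(n-j-a)!(j+a)!}=\sqrt{n!\,(\nb+q)!}\big/\sqrt{\binom{n}{j+a}\binom{\nb+q}{j+q}}$, which supplies exactly the denominator in the statement, the constant $\sqrt{n!\,(\nb+q)!}$ being absorbed into the prefactor. All remaining $j$-dependence sits in the Racah sum, in the two factorials $(n-j-a-\kappa)!$ and $(j+a+q-t-r+\kappa)!$ (the other summand factorials are constant in $j$); their arguments add to the $j$-independent value $n+q-t-r$, so the pair forms a single binomial $\binom{n+q-t-r}{j+a+q-t-r+\kappa}$. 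Relabelling the summation index by $k=\kappa+(a+q-t-r)$ turns the constant reciprocal factorials into $\binom{q-(t-r)}{k-(t-r)}\binom{t+r-q}{a-k}$ (so the summand vanishes outside $\max(t-r,a+q-t-r)\le k\le\min(q,a)$, consistent with the stated range $t-r\le k\le q$) and, using $n+q-t-r=\nb+(t-r)$, rewrites the binomial as $\binom{\nb+(t-r)}{j+k}$.

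It remains to convert $\binom{\nb+(t-r)}{j+k}$, still $j$-dependent, into the $\binom{\nb}{j+z}$ of the statement; this is exactly where the Vandermonde convolution $\binom{\nb+(t-r)}{j+k}=\sum_{k'=0}^{t-r}\binom{t-r}{k'}\binom{\nb}{j+k-k'}$ enters, and setting $z=k-k'$ produces the inner summation, the factor $\binom{t-r}{k'}$, and the single $j$-dependent binomial $\binom{\nb}{j+z}$. Gathering all $j$-independent material—the Racah radical (including $2J+1=\nb+q+1$), the constants $\sqrt{n!\,(\nb+q)!}$ and $(n+q-t-r)!^{-1}$, the overall sign $(-1)^{a+q-t-r}$ together with the surviving $(-1)^k$, and the three integer binomials—yields $h_{a,r}(k,k')$, manifestly free of $j$. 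The vanishing outside $-\min(a,q)\le j\le n-\max(a,2t-q)$ is just the angular-momentum constraints $|m_1|\le j_1$ and $|M|\le J$, i.e. $-a\le j\le n-a$ and $-q\le j\le\nb$; and the support condition $z\le a\le 2t-q+z$ follows by combining $k\le a\le k+(t+r-q)$ (nonvanishing of $\binom{t+r-q}{a-k}$) with $z\le k\le z+(t-r)$ (the range of $k'$).

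The work here is bookkeeping rather than ideas, so \textbf{the main obstacle} is keeping every range condition consistent through the index shift and the Vandermonde split, and—most error-prone—collapsing the mixture of square-root factorials (from the Racah prefactor and from $\sqrt{n!\,(\nb+q)!}$) against the bare factorials left by the sum into the compact radical $\sqrt{\binom{n}{t+r-q}\binom{2r}{r+t-q}\big/\big(\binom{n+q+r-t+1}{r+t-q}\binom{2r}{a+r-t}\big)}$ claimed for $h_{a,r}$. I would secure this final simplification by rewriting each factorial ratio as a binomial in the form that cancels the $n$-dependent factorials, and sanity-check the resulting identity numerically for a few small tuples $(n,t,r,a,q)$ before committing to the general algebra.
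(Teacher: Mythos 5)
Your proposal is correct and takes essentially the same route as the paper: the paper's proof simply starts one step later, quoting the binomial-sum form of $C_{r,a}^q(j)$ (your intermediate result) directly from the standard references, and then performs exactly the Vandermonde convolution $\binom{\nb+(t-r)}{j+k}=\sum_{k'=0}^{t-r}\binom{t-r}{k'}\binom{\nb}{j+k-k'}$ and the same support/range bookkeeping ($0\le z\le q$, $z\le a\le 2t-q+z$, and the vanishing outside $-\min(a,q)\le j\le n-\max(a,2t-q)$) that you describe. Your additional derivation of that starting formula from the Racah closed form is accurate but replaces what the paper handles by citation.
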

\begin{proof} We start with an explicit expression of the Clebsch-Gordan coefficient $C^{J,m}_{j_1,m_1;j_2,m_2}$ as a binomial sum (see, e.g., \cite[Eq.27.9.1]{abramowitz1965handbook} or \cite[p.~240]{ClebcshBook}). For our values of the angular momentum, it is as follows:
    \begin{align}\label{eq:LemmaClebsch_1}
       C_{r,a}^q(j)=
        \sqrt{\frac{\binom{n}{t+r-q}\binom{2r}{r+t-q}}{\binom{n+q+r-t+1}{r+t-q}\binom{n}{j+a}\binom{2r}{a+r-t}\binom{\nb+q}{j+q}}}\sum_{k=t-r}^{q}(-1)^k\binom{q-(t-r)}{k-(t-r)}\binom{t+r-q}{a-k}\binom{\nb+(t-r)}{j+k}.
    \end{align}
By definition, $C_{r,a}^q(j)$ is zero outside of the region $-a\leq j\leq n-a$ and $-q\leq j\leq n-(2t-q)$  \cite[p.235]{ClebcshBook}. Hence, it gives a non-zero value only when $-\min(a,q)\leq j \leq n-\max(a,2t-q)$. 
Write $\binom{\nb+(t-r)}{j+k}$ as a Vandermonde convolution
    \begin{align}\label{eq:LemmaClebsch_2}
        \binom{\nb+(t-r)}{j+k} = \sum_{k^\prime=0}^{t-r}\binom{t-r}{k^\prime}\binom{\nb}{j+k-k^\prime}
    \end{align}
and substitute into \eqref{eq:LemmaClebsch_1} to obtain the expression in \eqref{eq:LemmaClebsch_3}.
    
Recalling that $t-r\leq k \leq q$ and $-(t-r)\leq -k^\prime \leq 0$, we have $0\leq z=k-k^\prime \leq q$. Furthermore, the term $\binom{t+r-q}{a-k}$ on the RHS of \eqref{eq:LemmaClebsch_3} equals zero unless $k\leq a \leq t+r-q+k$. Since $k=k^\prime+z$ and $0\leq k^\prime\leq t-r$, we have $z\leq z+k^\prime\leq a \leq t+r-q+z+k^\prime\leq 2t-q+z$.
\end{proof}

\section{Proof of Theorem \ref{theorem:Main}}\label{sec:main thm proof}
Conditions (C1) and (C2) are required to ensure that codewords $\ket{\bfc_0}$ and $\ket{\bfc_1}$ form orthonormal states. For the conditions (C3) and (C4), we will analyze the KL conditions for the error set defined in \eqref{eq:DefErrorSet}. In other words, our aim is to show that relations
    \begin{align*}
        &\bra{\bfc_1}(\hat{E}^{r_2,\delta J_2}_{\delta m_2})^\dagger\hat{E}^{r_1,\delta J_1}_{\delta m_1}\ket{\bfc_0}=0,\\
        &\bra{\bfc_0}(\hat{E}^{r_2,\delta J_2}_{\delta m_2})^\dagger\hat{E}^{r_1,\delta J_1}_{\delta m_1}\ket{\bfc_0}-\bra{\bfc_1}(\hat{E}^{r_2,\delta J_2}_{\delta m_2})^\dagger\hat{E}^{r_1,\delta J_1}_{\delta m_1}\ket{\bfc_1}=0
    \end{align*}
    hold for all $\hat{E}^{r_2,\delta J_2}_{\delta m_2}, \hat{E}^{r_1,\delta J_1}_{\delta m_1}\in \cE_t$, $-r_1\leq \delta m_1,\delta J_1\leq r_1\leq t$ and $-r_2\leq \delta m_2,\delta J_2\leq r_2 \leq t$ whenever conditions (C3) and (C4) are satisfied. 
    
    Let us define parameters
    \begin{gather*}
  a^\prime= t-\delta m_1, \; q_1= t+\delta J_1,\;
   b^\prime= t-\delta m_2,\;  q_2= t+\delta J_2,
   \end{gather*}
and make a change of variables $j\leftmapsto j-a, j'\leftmapsto j'-b.$

Note that acting by $E_{\delta m}^{r,\delta J}$ on $\ket {J,m}$ results in
   $$
E_{\delta m}^{r,\delta J}\ket {J,m}\propto C_{J,m;r_1,\delta m}^{J+\delta J,m+\delta m}
\ket{J+\delta J, m+\delta m}.
  $$
Therefore, with the notation introduced above, the short form of the Clebsch-Gordan coefficients \eqref{eq:Clebsch-short}, and the definition of $\ket{\bfc_i}, i=0,1$ in  \eqref{eq:DefCodeSpace},
we obtain
    \begin{align*}
        &\hat{E}^{r_1,\delta J_1}_{\delta m_1}\ket{\bfc_0}\propto \sum_j 
        C_{r_1,a'}^{q_1}(j)
        \alpha_{j+a^\prime}\;\ket{n/2-t+q_1,j-n/2+t},\\
        &\hat{E}^{r_2,\delta J_2}_{\delta m_2}\ket{\bfc_1}\propto \sum_{j^\prime} 
        C_{r_2,b'}^{q_2}(j')
        \beta_{j^\prime+b^\prime}\;\ket{n/2-t+q_2,j^\prime-n/2+t}.
    \end{align*}
    Hence, the inner product
    \begin{align*}
        \bra{\bfc_1}(\hat{E}^{r_2,\delta J_2}_{\delta m_2})^\dagger\hat{E}^{r_1,\delta J_1}_{\delta m_1}\ket{\bfc_0}=\sum_j\sum_{j^\prime}
        C_{r_1,a'}^{q_1}(j)C_{r_2,b'}^{q_2}(j')
        \alpha_{j+a^\prime}\beta_{j^\prime+b^\prime}\delta_{q_1,q_2}\delta_{j,j^\prime}.
    \end{align*}
    Setting $q=q_1=q_2$, $\nb=n-2t+q$, using the Lemma \ref{lemma:Clebsch}, and changing the summation index we obtain
    \begin{multline}\label{eq:Proof_1}
        \bra{\bfc_1}(\hat{E}^{r_2,\delta J_2}_{\delta m_2})^\dagger\hat{E}^{r_1,\delta J_1}_{\delta m_1}\ket{\bfc_0}=
        \sum_{k_1=t-r_1}^{q}\sum_{k_1^\prime=0}^{t-r_1}\sum_{k_2=t-r_2}^{q}\sum_{k_2^\prime=0}^{t-r_2}h_{a^\prime,r_1}(k_1,k_1^\prime)h_{b^\prime,r_2}(k_2,k_2^\prime)\\
        \times\left(\sum_{j=-\min(a^\prime,b^\prime,q)}^{n-\max(a^\prime,b^\prime,2t-q)}\frac{1}{\sqrt{\binom{n}{j+a^\prime}\binom{n}{j+b^\prime}}}\frac{\binom{\nb}{j+z_1}\binom{\nb}{j+z_2}}{\binom{\nb+q}{j+q}}\alpha_{j+a^\prime}\beta_{j+b^\prime}\right).
    \end{multline}
    Hence it is sufficient to show that for all $z_1,z_2\leq q \leq 2t$, $z_1\leq a^\prime \leq 2t-q+z_1$, and $z_2\leq b^\prime \leq 2t-q+z_2$, the sum inside the parentheses is zero whenever condition (C3) is met. W.L.O.G we can assume that $z_1\geq z_2$. Using Lemma \ref{lemma:Combinatorics2} and changing the index of the summation, we obtain
    \begin{multline}\label{eq:Proof_2}
        \sum_{j=-\min(a^\prime,b^\prime,q)}^{n-\max(a^\prime,b^\prime,2t-q)}\frac{1}{\sqrt{\binom{n}{j+a^\prime}\binom{n}{j+b^\prime}}}\frac{\binom{\nb}{j+z_1}\binom{\nb}{j+z_2}}{\binom{\nb+q}{j+q}}\alpha_{j+a^\prime}\beta_{j+b^\prime}=\\
       \sum_{u=0}^{z_2}\sum_{v=0}^{q-z_1}\sum_{w=0}^{z_2-u}f_{z_1,z_2}(u,v,w) \left(\sum_{j=-\min(a^\prime,b^\prime,q)}^{n-\max(a^\prime,b^\prime,2t-q)}\frac{\binom{n-2t}{j-v-w+z_2}}{\sqrt{\binom{n}{j+a^\prime}\binom{n}{j+b^\prime}}}\alpha_{j+a^\prime}\beta_{j+b^\prime}\right).
    \end{multline}
Make a variable change $j\leftmapsto j-v-w+z_2$. Then the sum inside the parentheses in \eqref{eq:Proof_2} becomes
    \begin{align}\label{eq:Proof_3}
        \sum_{j_{\min}\le j\le j_{\max}}
        \frac{\binom{n-2t}{j}}{\sqrt{\binom{n}{j+a^\prime+v+w-z_2}\binom{n}{j+b^\prime+v+w-z_2}}}\alpha_{j+a^\prime+v+w-z_2}\beta_{j+b^\prime+v+w-z_2},
    \end{align}
where 
   $$j_{\min}=-\min(a^\prime,b^\prime,q)-v-w+z_2\text{ and }j_{\max}=n-\max(a^\prime,b^\prime,2t-q)-v-w+z_2.
   $$
We claim that it is possible to set $j_{\min}$ to 0 and $j_{\max}$ to $n$ without changing the value of the sum.
Let us begin with $j_{\min}$. Since $a^\prime\geq z_1$, $b^\prime \geq z_2$, and $v,w\ge0$, 
we conclude that $j_{\min}\leq \max(0,z_2-z_1,z_2-q)$. Since $z_1\geq z_2$ and $z_2\leq q$, this implies that $j_{\min}\leq 0$. 
Finally, since the negative values of $j$ do not contribute to the sum, we can replace $j_{\min}$ with 0. Turning to
$j_{\max}$, recall that $v\leq q-z_1$, $w\leq z_2-u\leq z_2$, $a^\prime \leq 2t-q+z_1$, and $b^\prime \leq z_2+2t-q$.
Therefore, $j_{\max}\geq n-2t+\min(0,z_2,z_1-z_2)\geq n-2t$. Since the values of $j$ in excess of $n-2t$ do not
contribute to the sum, we can set $j_{\max}=n$. 

Next let $a=a^\prime + v + w -z_2$, then 
    $$
    0\leq z_1-z_2\leq a \leq 2t-q+z_1+q-z_1+z_2-u-z_2=2t-u\leq 2t.
    $$
Similarly, let $b=b^\prime+v+w-z_2$, then 
   $$
   0\leq b \leq 2t-q+z_2+q-z_1+z_2-u-z_2\leq 2t-(z_1-z_2)-u\leq 2t.
   $$
Collecting this information, we have for the inner sum in \eqref{eq:Proof_2}
\begin{align}\label{eq:Proof_4}
    \sum_{j=-\min(a^\prime,b^\prime,q)}^{n-\max(a^\prime,b^\prime,2t-q)}\frac{\binom{n-2t}{j-v-w+z_2}}{\sqrt{\binom{n}{j+a^\prime}\binom{n}{j+b^\prime}}}\alpha_{j+a^\prime}\beta_{j+b^\prime}=    \sum_{j=0}^{n}\frac{\binom{n-2t}{j}}{\sqrt{\binom{n}{j+a}\binom{n}{j+b}}}\alpha_{j+a}\beta_{j+b},
\end{align}
where $a$ and $b$ are some integers such that $0\leq a,b \leq 2t$. Combining \eqref{eq:Proof_1}, \eqref{eq:Proof_2}, and \eqref{eq:Proof_4}, we can write \eqref{eq:Proof_1} as
\begin{multline}\label{eq:inner}
     \bra{\bfc_1}(\hat{E}^{r_2,\delta J_2}_{\delta m_2})^\dagger\hat{E}^{r_1,\delta J_1}_{\delta m_1}\ket{\bfc_0}=\\
        \sum_{k_1=t-r_1}^{q}\sum_{k_1^\prime=0}^{t-r_1}\sum_{k_2=t-r_2}^{q}\sum_{k_2^\prime=0}^{t-r_2}\sum_{u=0}^{z_2}\sum_{v=0}^{q-z_1}\sum_{w=0}^{z_2-u}h_{a^\prime,r_1}(k_1,k_1^\prime)h_{b^\prime,r_2}(k_2,k_2^\prime)
        f_{z_1,z_2}(u,v,w)\\
        \times
        \left( \sum_{j=0}^{n}\frac{\binom{n-2t}{j}}{\sqrt{\binom{n}{j+a}\binom{n}{j+b}}}\alpha_{j+a}\beta_{j+b}\right)
\end{multline}
for some values $0\leq a,b \leq 2t$. 
Engaging condition (C3) of the theorem, we now claim that 
   $$
   \bra{\bfc_1}(\hat{E}^{r_2,\delta J_2}_{\delta m_2})^\dagger\hat{E}^{r_1,\delta J_1}_{\delta m_1}\ket{\bfc_0}=0
   $$
for all $\hat{E}^{r_2,\delta J_2}_{\delta m_2},\hat{E}^{r_1,\delta J_1}_{\delta m_1}\in \cE_t$.

Following the same sequence of steps, one can show that, assuming condition (C4) of the theorem, 
   \begin{equation}\label{eq:C4}
\bra{\bfc_0}(\hat{E}^{r_2,\delta J_2}_{\delta m_2})^\dagger\hat{E}^{r_1,\delta J_1}_{\delta m_1}\ket{\bfc_0}=\bra{\bfc_1}(\hat{E}^{r_2,\delta J_2}_{\delta m_2})^\dagger\hat{E}^{r_1,\delta J_1}_{\delta m_1}\ket{\bfc_1}
  \end{equation}
for all $\hat{E}^{r_2,\delta J_2}_{\delta m_2},\hat{E}^{r_1,\delta J_1}_{\delta m_1}\in \cE_t$. Indeed,
it is by now clear that the expression for $\bra{\bfc_i}(\hat{E}^{r_2,\delta J_2}_{\delta m_2})^\dagger\hat{E}^{r_1,\delta J_1}_{\delta m_1}\ket{\bfc_i},$ $ i=0,1$ takes the form similar to the one derived above in \eqref{eq:inner}, except for the fact that in the sum
in the parentheses we will obtain $\alpha_{j+a}\alpha_{j+b}$ for $i=0$ and $\beta_{j+a}\beta_{j+b}$ for $i=1$. These
expressions will coincide, and thus \eqref{eq:C4} will be fulfilled, if condition (C4) holds true.

Finally, to prove the error detection conditions, we need to show that
\begin{align*}
        0&=\bra{\bfc_1}\hat{E}^{r,\delta J}_{\delta m}\ket{\bfc_0},\\
        0&=\bra{\bfc_0}\hat{E}^{r,\delta J}_{\delta m}\ket{\bfc_0}-\bra{\bfc_1}\hat{E}^{r,\delta J_1}_{\delta m}\ket{\bfc_1}
    \end{align*}
for all $r\leq t$. This follows upon making a change of variables $2t\rightarrow t$. 

\section{Offset supports and the construction of \cite{AEcodes}}\label{sec: example}
The proposal in \cite{AEcodes} is designed around eliminating the off-diagonal terms in the inner products appearing in the
KL condition. Starting with the code definition \eqref{eq:DefCodeSpace}, 
define the supports of the basis vectors as $\supp(\ket{\bfc_0})=\{j : \alpha_j\neq 0\}$ and $\supp(\ket{\bfc_1})=\{j : \beta_j\neq 0\}$. Assume that they are staggered in the sense that
     $$
     |j_k-j_l|\geq 2t+1 \quad \text{ for all }j_k,j_l\in \supp(\ket{\bfc_0})\cup \supp(\ket{\bfc_1}).
     $$
We quickly observe that now condition (C3) is satisfied trivially since $\alpha_{j+a}\beta_{j+b}=0$ for all $j=1,2,\ldots,n$ and $0\leq a,b \leq 2t$. 
To verify condition (C4), it is sufficient to examine the diagonal terms $a=b$. We obtain that it holds if
    \begin{align*}
        \sum_{j=0}^n\frac{\binom{n-2t}{j}}{\binom{n}{j+a}}\left(\alpha_{j+a}^2-\beta_{j+a}^2\right)=0
    \end{align*}
for all $0\leq a \leq 2t$. Rewriting the left-hand side based on Lemma \ref{Lemma:Combinatorics1}, we further obtain
    \begin{align*}
        \frac{1}{\binom{2t}{a}\binom{n}{2t}}\sum_{j=0}^n\binom{j+a}{a}\binom{n-j-a}{2t-a}\left(\alpha_{j+a}^2-\beta_{j+a}^2\right) = 0.
    \end{align*}
 Upon the variable change $j\leftmapsto j+a$, both conditions (C3) and (C4) are satisfied if and only if
    \begin{align*}
        \sum_{j=0}^n\binom{j}{a}\binom{n-j}{2t-a}\left(\alpha_{j}^2-\beta_{j}^2\right) = 0 \quad\quad \text{for all $0\leq a\leq 2t$}.
    \end{align*}
The term term $\binom{j}{a}\binom{n-j}{2t-a}$ is a polynomial in $j$ of degree $2t$. Hence, if
    \begin{align}\label{eq:diagonal}
        \sum_{j=0}^n \left( \alpha_j^2-\beta_j^2 \right)j^i=0 \quad\quad \text{for all $0\leq i\leq 2t$}
    \end{align}
conditions (C3) and (C4) hold true. Thus, if we manage to solve this system of equations for 
$\alpha_j,\beta_j$, this will yield an AE code with the basis given in \eqref{eq:DefCodeSpace}. This recovers 
the construction strategy of AE codes in \cite{AEcodes}, see esp. their Eq.~(C25). 

We note that, while finding solutions to system \eqref{eq:diagonal} is not too hard, it becomes a difficult task
once we attempt to construct more efficient codes. Eliminating the off-diagonal conditions gives a way of approaching this, which here is achieved by inserting gaps between the excitation levels. See also \cite{aydin2023family} for other ways of addressing this issue.

\section{Proof of Proposition \ref{prop:PICodes}}
Let $\cC$ be a $k$-dimensional permutation-invariant code with basis $\{\ket{\bfc_0},\ket{\bfc_1},\ldots,\ket{\bfc_k}\}$, and $\cQ$ be the corresponding AE code defined by the basis $\{e\ket{\bfc_i}):i=1,2,\ldots,k\}$. Let us choose two arbitrary basis vectors
    \begin{align*}
        \ket{\bfc_l} = \sum_{j=0}^n\alpha_j\ket{D^n_j}\quad\text{and}\quad \ket{\bfc_m}=\sum_{j=0}^n\beta_j\ket{D^n_j},
    \end{align*}
    where $\alpha_j,\beta_j$ are reals for all $j=1,2,\ldots,n$. Then the corresponding codewords of $\cQ$ has the form,
    \begin{align*}
        \ket{\tilde{\bfc_l}}&=e\left(\ket{\bfc_l}\right)=\sum_{j=0}^n\alpha_j\ket{n/2,j-n/2},\\
        \ket{\tilde{\bfc_m}}&=e\left(\ket{\bfc_m}\right)=\sum_{j=0}^n\beta_j\ket{n/2,j-n/2}.
    \end{align*}
Suppose $\cC$ corrects $t$ errors, so the coefficient vectors of any pair of basis codewords must satisfy the conditions of
Theorem \ref{thm:PIConditions}. This implies that the coefficient vectors $\boldsymbol{\alpha}= (\alpha_0,\alpha_1,\ldots,\alpha_n) $ and $ \boldsymbol{\beta} = (\beta_0,\beta_1,\ldots,\beta_n) $ also satisfy them in the sense that these conditions hold for any
choice of pairs of $\alpha$'s and/or $\beta$'s. Now recall that the conditions on the coefficients in Theorem \ref{thm:PIConditions} 
are the same as in Theorem \ref{theorem:Main}. Referring to Theorem \ref{theorem:ErrorCorrection}, we conclude that the code $\cQ$ corrects up to order-$t$ transitions and detects up to order-$2t$ transitions. 

\twocolumngrid

\bibliography{PI}

\end{document}